\journal{J.\ Theor.\ Biol.}
\newtheorem{theorem}{Theorem}[section]
\newcommand{\pp}[2]{\frac{\partial {#1}}{\partial {#2}}}
\newcommand{\dd}[2]{\frac{\mathrm{d}{#1}}{\mathrm{d}{#2}}}
\newcommand{\Vc}{V_{\mathrm{c}}}
\newcommand{\Ag}{A_{\mathrm{g}}}
\newcommand{\rg}{r_{\mathrm{g}}}
\newcommand{\rgtilde}{\tilde{r}_{\mathrm{g}}}
\newcommand{\sg}{s_{\mathrm{g}}}
\newcommand{\sgtilde}{\tilde{s}_{\mathrm{g}}}
\newcommand{\lc}{l_{\mathrm{c}}}
\newcommand{\cb}{\bar{c}}
\newcommand{\Cb}{\bar{C}}
\newcommand{\lb}{\bar{l}}
\newcommand{\cc}{c_{\mathrm{c}}}
\newcommand{\Cc}{C_{\mathrm{c}}}
\newcommand{\cs}{c_{\mathrm{s}}}
\newcommand{\ccb}{\bar{c}_{\mathrm{c}}}
\newcommand{\Sa}{S_{\!\mathrm{a}}}
\newcommand{\cinf}{c_{\infty}}
\newcommand{\linf}{l_{\infty}}
\newcommand{\cmax}{c_{\mathrm{max}}}
\newcommand{\lmin}{l_{\mathrm{min}}}
\newcommand{\linfone}{l_{\infty 1}}
\newcommand{\linftwo}{l_{\infty 2}}
\newcommand{\rme}{\mathrm{e}}
\begin{document}

\begin{frontmatter}



\title{A one-dimensional moving-boundary model
for tubulin-driven axonal growth}


\author{S.~Diehl, E.~Henningsson, A.~Heyden and S.~Perna}
\address{Centre for Mathematical Sciences, Lund University, P.O.~Box 118,
S-221~00 Lund, Sweden. E-mail: {\tt diehl@maths.lth.se}, {\tt
erikh@maths.lth.se}, {\tt heyden@maths.lth.se}, {\tt
stefano.perna@studenti.unimi.it}}

\begin{abstract}
A one-dimensional continuum-mechanical model of axonal elongation due to
assembly of tubulin dimers in the growth cone is presented. The conservation of
mass leads to a coupled system of three differential equations. A partial
differential equation models the dynamic and spatial behaviour of the
concentration of tubulin that is transported along the axon from the soma to
the growth cone. Two ordinary differential equations describe the
time-variation of the concentration of free tubulin in the growth cone and the
speed of elongation, respectively. All steady-state solutions of the model are
categorized. Given a set of the biological parameter values, it is shown how
one easily can infer whether there exist zero, one or two steady-state
solutions and directly determine the possible steady-state lengths of the axon.
Explicit expressions are given for each stationary concentration distribution.
It is thereby easy to examine the influence of each biological parameter on a
steady state. Numerical simulations indicate that when there exist two steady
states, the one with shorter axon length is unstable and the longer is stable.
Another result is that, for nominal parameter values extracted from literature,
in a large portion of a fully grown axon the concentration of free tubulin is
lower than both concentrations in the soma and in the growth cone.
\end{abstract}

\begin{keyword}
neurite elongation\sep partial differential equation \sep steady state \sep
polymerization \sep microtubule cytoskeleton

\end{keyword}

\end{frontmatter}


\section{Introduction}\label{sect:introduction}

Axons are cables that transmit electrical signals between neurons. When the
cell body of a neuron, the soma, is fully formed, the neuron begins to sprout
small projections known as neurites. After an initial stage and through a
process not fully detailed yet, one of these neurites exhibits a dramatic
increase in growth and is denoted the axon. The axon elongates and seeks its
target in the body. This stage is mainly guided by the mobile and sensitive tip
of the axon called the growth cone, which is highly responsive to chemical
substances in the body environment. Such {chemical cues} can attract or repel
the growth cone. This is achieved by a reorganization of the internal protein
structure, the cytoskeleton, inside the growth cone. If a gradient of guidance
cue is found, cytoskeletal changes happen asymmetrically and the growth cone
turns towards or away from the guidance cue. A description and review of the
cytoskeletal dynamics and transport in growth cone motility and axon guidance
is provided by Dent and Gertler~\cite{Dent2003}. A later review by Suter and
Miller~\cite{Suter2011} focuses on the influence of the forces generated by the
growth cone on the axonal elongation. Excellent reviews of the different types
of modelling of different stages in the development of axons and their
behaviour are provided by Graham and van Ooyen~\cite{Graham2006BMC}, Kiddie et
al.~\cite{Kiddie2005} and van Ooyen~\cite{vanOoyen2011}. Additional biological
insight are provided by Miller and Heidemann~\cite{Miller2008}. In a recent
publication, Hjort et al.~\cite{Hjort2014} model the competitive tubulin-driven
outgrowth and withdrawal of different branches of the same neuron.

The elongation of the axon is caused by an assembly (polymerization) of free
tubulin dimers to microtubules that build up the cytoskeleton. This occurs
mainly in the growth cone, while tubulin is produced in the soma. The fact that
axons can grow very long has initiated both experimental and theoretical
investigations of the biological and physical processes which are responsible
for the transportation of tubulin along the long axon. Fundamental variables
that affect the growth are the amount of available free tubulin in the growth
cone, the reaction rates of the polymerization and depolymerization processes
in the growth cone, the degradation of tubulin in the entire axon, the tubulin
production rate in the soma and the processes of transportation of tubulin
along the axon.

We have been attracted by the question of whether it is possible to develop a
simple mechanistic model for axonal growth so that some investigations can be
performed on how the key biological parameters influence the axonal growth.
Hence, we will confine to a one-dimensional model and neither consider any
axonal pathfinding nor the branching of axons which sometimes occur. Axons may
not only grow but also contract or stay still during time periods, and also
alternate between these three phases. This has lead to include stochastic
variables in models. For example, Janulevicius et al.~\cite{Janul2006} model
the growth and shrinkage phases of microtubules and the random switches between
these phases. Model parameters are collected from the experimental work by
Walker et al.~\cite{Walker1988}. Deterministic models, which do not contain
such random processes, try to catch the mean value of the behaviours of many
microtubules.

All mathematical models of biological phenomena are formulated with substantial
simplifications. Some phenomenological models utilize mathematical functions to
describe certain connections; for example, O'Toole and Miller~\cite{OToole2011}
quantify axonal elongation in terms of slow axonal transport, protein
degradation, protein density, adhesion strength and axonal viscosity.

Models of the dynamic behaviour are formulated by differential equations. Some
dynamic models confine to ordinary differential equations (ODEs), hence
containing unknown functions (the model outputs) that depend only on time; see
e.g.\ \cite{vanOoyen2001,VanVeen1994}. Because of the substantial length of an
axon it is, however, natural to assume that the concentration of a substance
varies both with time and position along the axon. Deriving a dynamic model
from a physical law, such as the conservation of mass, leads then necessarily
to a partial differential equation (PDE); see
\cite{Garcia2012,Graham2006,McLean2004,McLean2006,McLean2004num,Sadegh2010,Smith2001}.
The multi-compartment model by Hjort et al.~\cite{Hjort2014}, which has a
system of ODEs for each axon, is essentially a spatially discretization of a
PDE. However, it is generally safer to write a model including functions of
several variables (spatial location and time) in terms of PDE and then use
established numerical methods for their simulation.

Garcia et al.~\cite{Garcia2012} model the diffusion of tubulin along the axon
with the linear diffusion PDE and the active transport by tracking the position
of each motor protein assuming they all move with a constant velocity. The
microtubule assembly at the tip is modelled by an ODE with a constant
polymerization rate and they ignore the depolymerization and degradation of
tubulin. Furthermore, they propose a mechanical model of the polymerized
microtubules in order to describe how the axonal growth process influences the
mechanical properties of the cytoskeleton.

Smith and Simmon~\cite{Smith2001} presented and analyzed an interesting model
of the two components of axonal cargo transport, diffusion and active, of a
substance by means of three linear PDEs, all originating from the conservation
of mass of one substance present in three states. One diffusion equation models
the free substance, and the other two advection equations model the anterograde
and retrograde moving cargoes (by motor proteins), respectively. The coupling
between the equations occur via source terms, or rather binding/detachment
terms, which model the movements of substance between the free state and either
of the actively moving-cargo states, in both directions. There are seven model
parameters; one diffusion constant, two advection velocities and four rate
constants for the bindings/detachments. The axon is assumed to have a fixed
length and there is no degradation of the substance included in the model.

The model by Smith and Simmon~\cite{Smith2001} has been used by Sadegh Zadeh
and Shah~\cite{Sadegh2010}, who successfully calibrated the parameters of the
model to published experimental data.

As noted by Smith and Simmon~\cite{Smith2001}, a simplified model of their
three linear PDEs consists of a single advection-diffusion PDE with only two
model parameters; an effective drift velocity and an effective drift diffusion
constant; see \cite[Formulas (4a)--(4b)]{Smith2001}. We are interested in such
a simplified approach for the transport of tubulin along the axon. Such an
advection-diffusion PDE, with an additional sink term modelling the degradation
of tubulin, can be found in the model by McLean and Graham~\cite{McLean2004}.

The present work is particulary motivated by the deterministic one-dimensional
continuum model by McLean and Graham~\cite{McLean2004}. The linear PDE
describes the concentration of tubulin as a function of time and distance from
the soma along the axon to the growth cone. The moving growth cone means that
the right boundary of the PDE is moving with an unknown velocity, which is the
axon growth velocity. Such a moving-boundary problem causes some mathematical
difficulties. The axonal elongation speed is modelled by an ODE, which should
capture the assembly and disassembly of microtubules depending on the available
free tubulin concentration at the moving boundary. The different steady-state
solutions are presented in \cite{McLean2004}. In
\cite{Graham2006,McLean2004num}, numerical simulations of the model are
presented. In \cite{Graham2006}, Graham et al.\ also introduced an additional
parameter, the autoregulation gain, related to the production of tubulin in the
soma. In \cite{McLean2006}, the interesting question of stability of the steady
states was investigated.

Our model can be seen as an extension and modification of the one by McLean and
Graham~\cite{McLean2004}. In addition to the linear PDE and ODE described
above, we have an ODE modelling the dynamics of the free tubulin concentration
in the growth cone where the motor proteins release the tubulin dimers. Another
difference is the modelling of the flux of tubulin over the moving boundary.

The rest of the paper is organized as follows. The model is derived in detail
in Section~\ref{sect:model} with the aim of motivating all parameters. The
nominal parameter values chosen are motivated in Section~\ref{sect:parameters}.
All steady-state solutions of the model are given and characterized in
Section~\ref{sect:SS}. Section~\ref{sect:plotsSSpar} contain plots of how the
different steady-state solutions depend on the variation of each parameter from
the nominal set of parameter values. Investigations of the stability of the
steady states by means of numerical simulations are given in
Section~\ref{sect:stability}. Discussions and conclusions, including a detailed
description of the difference between our model and the one by McLean and
Graham~\cite{McLean2004}, are given in Section~\ref{sect:discussion}.

\section{The Model}\label{sect:model}

In this section, a continuum mechanical model for the growth of an axon is
derived. Any modelling methodology for capturing a biological or physical
phenomenon has to start by making idealizing assumptions. In neuronal
physiology it is a known fact that the growth of a newborn axon is strictly
connected to the presence of the group of proteins called tubulin. The main
idealizing assumption is that tubulin is the only substance involved in the
growth of an axon. Another idealizing assumption is that the molecules of free
tubulin are so small that one can consider them as a homogeneous continuum.
Then the physical conservation law of mass can be used to derive differential
equations that govern the dynamic behaviour of the concentration of tubulin
both along the axon and in the growth cone. Finally, some constitutive
assumptions have to be introduced to couple and thereby reduce the number of
unknown variables so that a solution of the equations can be obtained. Such
constitutive assumptions are based on physical and biological facts as far as
possible.

\subsection{Idealizing modelling assumptions}

\begin{figure}
\begin{center}
\includegraphics[width=0.4\textwidth]{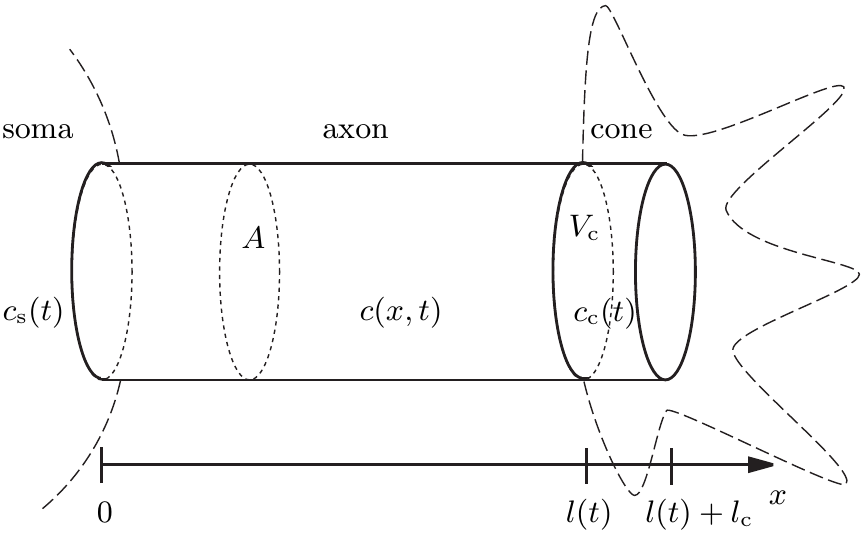}
\end{center}
\caption{Schematic illustration of a growing axon.}
\label{fig:Growing_axon}
\end{figure}%
In Figure~\ref{fig:Growing_axon}, an idealized axon is shown. The
one-dimen\-sional $x$-axis is placed along the axon, which at time $t$~[s] has
the length $l(t)$~[m]. The latter variable is one output of the model. The
effective cross-sectional area of the axon through which tubulin is transported
is assumed to be a constant denoted by $A$~[m$^2$]. Another output of the model
is the concentration of tubulin along the axon, which is assumed to vary with
$x$ and $t$; i.e.\ $c=c(x,t)$~[mol$/$m$^3$]. Tubulin is produced only in the
cell body, the {soma}, which is placed to the left of $x=0$. We assume that the
time-varying tubulin concentration in the soma $\cs(t)$ is known. No tubulin is
produced along the axon, but degradation occurs at the constant rate
$g$~[1$/$s]. The tip of the axon, the {growth cone}, is located to the right of
$x=l(t)$, and has the volume $\Vc$~[m$^3$]. For simplicity, we introduce a
characteristic length of the growth cone; $\lc:=\Vc/A$ (it turns out that the
final model only contains this ratio). The growth cone is considered to be a
completely mixed compartment in which the unknown concentration of tubulin
dimers is denoted by $\cc(t)$. No production of tubulin occurs in the cone, but
consumption occurs, partly because of degradation at the constant rate
$g$~[1$/$s], partly because of the assembly of dimers to microtubules that
elongates the axon at a constant rate $\rgtilde$~[1$/$s], i.e., $\rgtilde$ is
the reaction rate of polymerization of guanosine triphosphate (GTP) bound
tubulin dimers to microtubule bound guanosine diphosphate (GDP). As for the
polymerization, we let $\Ag$~[m$^2$] denote the constant effective area of
growth and $\rho$~[mol$/$m$^3$] the density of the assembled microtubules (the
cytoskeleton). Finally, we assume that the assembled microtubules in the growth
cone may disassemble at the constant rate $\sgtilde$~[$1/$s]. All biological
and physical constants are assumed to be non-negative.

\subsection{The conservation law of mass and constitutive assumptions}

The conservation of mass of tubulin states that the rate of increase of mass in
an arbitrary interval $(x_1,x_2)$ of the $x$-axis equals the mass flux (mol per
unit time) in minus the flux out plus the production inside the interval:
\begin{equation}\label{eq:conslawint}
\dd{}{t}\int\limits_{x_1}^{x_2} Ac(x,t)\,\mathrm{d}x
=A\left(F|_{x=x_1}-F|_{x=x_2}\right)+
\int\limits_{x_1}^{x_2}A\Sa\,\mathrm{d}x,
\end{equation}
where $F$~[mol$/($m$^2$s$)$] is the flux per unit area of tubulin and
$\Sa$~[mol$/($m$^3$s$)$] is a source/sink modelling local
production/consumption in the axon. Since only degradation is taken into
account, we have $\Sa=-gc$. Note that the flux $F$ is the product of the
concentration $c$ and the velocity of tubulin $v$~[m$/$s]:
\begin{equation}\label{eq:F=vc}
F=cv.
\end{equation}

The conservation law \eqref{eq:conslawint} yields two dynamic equations.
Letting the interval $(x_1,x_2)$ shrink to a point $x\in(0,l(t))$ one gets the
PDE (assuming $c$ and $F$ are continuously differentiable functions):
\begin{equation}\label{eq:PDE}
\pp{c}{t}+\pp{F}{x} = -gc\quad\text{for}\quad 0<x<l(t).
\end{equation}
This is one equation with two unknowns; $c$ and $F$ (or $v$). A constitutive
assumption is needed. It is often assumed that the flux of tubulin $F$ is
determined by active transport by motor proteins having the constant velocity
$a$~[m$/$s] and diffusion of free tubulin according to Fick's law
\citep{Galbraith2000,Galbraith1999,Graham2006BMC,Hjort2014,Kiddie2005,McLean2004,Miller2008,
Smith2001,vanOoyen2001,VanVeen1994,Sadegh2010}. The constitutive assumption
relating the flux $F$ and the concentration $c$ is
\begin{equation}\label{eq:fluxlaw}
F(c,c_x) = ac-Dc_x,
\end{equation}
where $D$~[m$^2/$s] is the diffusion coefficient and $c_x=\partial c/\partial
x$. Substitution of \eqref{eq:fluxlaw} into PDE \eqref{eq:PDE} gives the
advection-diffusion-reaction equation
\begin{equation}\label{eq:PDE1}
\pp{c}{t}+a\pp{c}{x}-D\frac{\partial^2c}{\partial x^2} = -gc\quad\text{for}\quad 0<x<l(t).
\end{equation}

The conservation of mass \eqref{eq:conslawint} should also hold for the growth
cone located in the interval $(x_1,x_2)=(l(t),l(t)+\lc)$. Over (the moving)
right boundary $x=l(t)+\lc$, there is no flux; however, there is over $x=l(t)$.
To write down this flux from left to right, we introduce the following
notation:
\begin{align*}
c^-&:=c\big(l(t)^-,t\big)=\lim_{\epsilon\searrow 0} c\big(l(t)-\epsilon,t\big),
\end{align*}
which is the concentration just to the left of the boundary $x=l(t)$.
Similarly, we use the notation $c_x^-$ and $v^-$. The boundary $x=l(t)$ moves
with velocity $l'(t)=\mathrm{d}{l}/\mathrm{d}{t}$ while the tubulin moves with
the velocity $v^-$. If $v^-$ is higher than $l'(t)$, then there is a net inflow
of tubulin to the growth cone; otherwise, there is an outflow. Hence, it is the
relative velocity of tubulin $v^--l'(t)$ that determines whether the net influx
is positive or negative. Since the flux (amount per unit time) is the product
of the net velocity, the concentration and the cross-sectional area, the net
flux over $x=l(t)$ in the $x$-direction is
\begin{align}\label{eq:temp2}
A\big(v^--l'(t)\big)c^-.
\end{align}
With the physical law \eqref{eq:F=vc} and the constitutive assumption for $F$
\eqref{eq:fluxlaw}, the velocity of tubulin can be written
\begin{align*}\label{eq:v}
&v(c,c_x) = \frac{F(c,c_x)}{c}=a-D\frac{c_x}{c}.
\end{align*}
Substituting this into \eqref{eq:temp2}, we get the flux [mol$/$s] of tubulin
over $x=l(t)$ (seen by an observer moving with the boundary):
\begin{equation}\label{eq:fluxoverl}
A\big(ac^--Dc_x^--l'(t)c^-\big).
\end{equation}
The conservation of the amount of free tubulin in the growth cone leads to the
following ODE:
\begin{equation}\label{eq:cons_law_cone}
\begin{split}
&\underbrace{\dd{(\Vc\cc)}{t}}_{\text{mass increase per unit time}}=\\
&=\underbrace{A\big(ac^--Dc_x^--l'(t)c^-\big)}_{\text{flux in}}
-\underbrace{g\Vc\cc}_{\text{degradation}}
-\underbrace{\rgtilde\Vc\cc}_{\text{assembly}}
+\underbrace{\sgtilde\rho\Ag\kappa\lc}_{\text{disassembly}}.
\end{split}
\end{equation}
We assume that the degradation of free tubulin occurs at the same rate $g$ as
in the entire axon. The two last terms describe mass per unit time of the
assembly/disassembly (polymerization/deplymerization) of microtubules. The
assembled mass per unit time is assumed to be proportional to the available
amount of tubulin in the cone $\Vc\cc(t)$ with the reaction rate $\rgtilde$ as
the proportionality constant. The disassembly at the rate $\sgtilde$ is a
process that reduces the growth rate, and is proportional to the already
assembled microtubules, $\rho\Ag\kappa\lc$, where $\kappa>0$ is a dimensionless
constant such that $\kappa\lc$ is the length of the assembled microtubles that
may undergo disassembly. Experimental evidence that this term is independent of
the free tubulin concentration has been provided by Walker et
al.~\cite{Walker1988}.

Before simplifying Equation~\eqref{eq:cons_law_cone}, we note that it contains
the two unknown functions $\cc(t)$ and $l(t)$. A relation between these is
needed. The conservation of the amount of assembled (polymerized) tubulin is
the following, where the terms on the right-hand side have been explained
above; see the two last terms of \eqref{eq:cons_law_cone}:
\begin{align}\label{eq:dldt}
&\underbrace{\dd{(\rho\Ag l)}{t}}_{\text{mass increase per unit time}}
=\underbrace{\rgtilde\Vc\cc(t)}_{\text{assembly}}
-\underbrace{\sgtilde\rho\Ag\kappa\lc}_{\text{disassembly}}.
\end{align}
Since the density $\rho$ and growth area $\Ag$ of assembled microtubules are
assumed to be constants, we can divide Equation~\eqref{eq:dldt} by $\rho\Ag$
and obtain
\begin{align}\label{eq:dldt1}
&\dd{l}{t}=\frac{\rgtilde\Vc}{\rho\Ag}\cc(t)-\sgtilde\kappa\lc.
\end{align}
If the concentration of tubulin is too low, this equation yields that the
elongation is negative, which means that the axon shrinks. To compare with the
notation in the model of McLean and Graham~\cite{McLean2004}, who write the
equation $l'(t)=\rg\cc-\sg$, we set
\begin{align*}
&\rg:=\frac{\rgtilde\Vc}{\rho\Ag}\quad\text{and}\quad\sg:=\sgtilde\kappa\lc,
\end{align*}
where $\sg$ is easily interpreted as the maximum speed of shrinkage, which
occurs when $\cc=0$. The parameter $\rg$ is a sort of rate coefficient for the
polymerization; however, it has not the unit of a rate constant. It is the
proportionality concentration between the resulting elongation speed due to the
polymerization and $\cc$ when there is now disassembly. We also define
\begin{align}\label{eq:cinf}
&\cinf:=\frac{\sg}{\rg}=\frac{\sgtilde\rho\Ag\kappa\lc}{\rgtilde\Vc}=\frac{\sgtilde\rho\Ag\kappa}{\rgtilde
A},
\end{align}
which is the threshold concentration when $l'(t)=0$, i.e.\ when there is no
elongation, since the processes of assembly and disassembly are equally large.
Therefore, $\cinf$ is the steady-state concentration in the growth cone. Then
\eqref{eq:dldt1} can be written
\begin{align}\label{eq:dldt2}
&\dd{l}{t}=\rg(\cc(t)-\cinf).
\end{align}
Thus, $\rg$ is the proportionality constant between the elongation speed and
the excess tubulin concentration in the growth cone above the steady-state
level $\cinf$.

We now simplify Equation~\eqref{eq:cons_law_cone} by dividing by $A$, use
\eqref{eq:cinf} to introduce $\cinf$ and use \eqref{eq:dldt2} to express it
without $l'(t)$:
\begin{equation}\label{eq:dccdt1}
\lc\dd{\cc}{t}=ac^--Dc_x^--\rg(\cc-\cinf)c^--g\lc\cc-\rgtilde\lc\cc+\rgtilde\lc\cinf.
\end{equation}

\subsection{Model equations with boundary and initial conditions}

The three model equations are \eqref{eq:PDE1}, \eqref{eq:dldt2} and
\eqref{eq:dccdt1}, and the unknowns are $c(x,t)$, $\cc(t)$ and $l(t)$. The
physical/biological parameters that need to be specified are $a$, $D$, $g$,
$\lc$, $\rgtilde$, $\rg$, $\cinf$ and $\cs(t)$. When diffusion is present
($D>0$) it is well known that the parabolic equation~\eqref{eq:PDE1} has smooth
solutions $c(x,t)$. Therefore, it is natural to impose boundary conditions
requiring that the concentration is continuous at $x=0$ and $x=l$. In
particular, this means that $c^-=c(l(t),t)=\cc(t)$, which simplifies the
right-hand side of \eqref{eq:dccdt1}. Initially, we assume that the neurite
that becomes the axon has the length $l_0>0$ (which may be small) and that the
given initial concentration distribution is $c_0(x)$ (the index denotes $t=0$).
From a mathematical point of view, the function $c_0(x)$ can be chosen rather
freely; however, a natural choice for a small neurite may be the soma
concentration. The full model is the following:
\begin{equation}\label{eq:model}
\left\{
\begin{aligned}
&\pp{c}{t}+a\pp{c}{x}-D\frac{\partial^2c}{\partial x^2} = -gc,&\quad&0<x<l(t),\ t>0,\\
&\begin{split}
&\lc\dd{\cc}{t}=(a-g\lc)\cc-Dc_x^-\\
&\qquad -(\rg\cc+\rgtilde\lc)(\cc-\cinf),
\end{split}&& t>0,\\
&\dd{l}{t}=\rg(\cc-\cinf),&&t>0,\\
&c(0,t) = \cs(t),&&t\ge 0,\\
&c\big(l(t),t\big)=\cc(t),&&t>0,\\
&c(x,0) =c_0(x),&&0\le x\le l(0)=l_0,\\
&\cc(0)=c_0(l_0).&&\\
\end{aligned}
\right.
\end{equation}
Since we are interested in steady-state solutions in this work, we will mostly
use a constant soma concentration.

\section{Parameter values}\label{sect:parameters}

The values of the model parameters are collected from the literature. The
nominal values and possible intervals are given in Table~\ref{table:1}. Here,
we motivate our choices.

\begin{table}[tbh]
\caption{Parameter values. In steady state, the nominal parameter values
together with the constant soma concentration $\cs=\cinf$ imply $\linf=65.6$~mm
(see Figure~\ref{fig:varyc0}).}\medskip

\begin{tabular}{llll}
Parameter & Nominal value & Interval & Unit \\
\hline
$a$ & $1$ & $0.5$--$3$ & $10^{-8}$~m$/$s \\
$D$ & $10$ & $1$--$25$ & $10^{-12}$~m$^2/$s \\
$g$ & $5$ & $1$--$200$ & $10^{-7}$~s$^{-1}$ \\
$\lc$ & $4$ & $0.1$--$1000$ & $10^{-6}$~m \\
$\rg$ & $1.783$ & --- & $10^{-5}$~m$^4/($mol\,s$)$ \\
$\rgtilde$ & 0.053 & --- & s$^{-1}$ \\
$\cinf$ & $11.90$ & --- & $10^{-3}$~mol$/$m$^3$ \\
$\cs$ & --- & $0$--$4\cinf$ & mol$/$m$^3$ \\
 \end{tabular}
\label{table:1}
\end{table}

Galbraith et al.~\cite{Galbraith1999} present experimentally determined values
for the transportation of tubulin in the giant squid axon. For the active
transport, they obtained $a=93.9~\mu$m$/$h $\approx 2.3$~mm$/$day $\approx
2.66\cdot 10^{-8}$~m$/$s and the diffusion coefficient was about
$D=8.59~\mu$m$^2/$s~$=8.59\cdot 10^{-12}$~m$^2/$s. Reported active transport
speeds for other animals are $a\approx$~0.5--2~mm$/$day; see Galbraith and
Gallant~\cite[Table~1]{Galbraith2000} and Miller and
Samuels~\cite[Table~1]{Miller1997}, and the references therein. Experiments
with cultured neuronal cells by Keith~\cite{Keith1987} yielded the two
different values $a=2.3$~mm$/$day and $0.31$~mm$/$day for the slow components
\emph{a} and \emph{b}, respectively.

Salmon et al.~\cite{Salmon1984} have measured the radial diffusion of tubulin
in cytoplasm of eggs and embryos of the sea urchin {\em Lytechinus variegatus}
and via the linear diffusion equation estimated the diffusion coefficient to
$D=5.9\cdot 10^{-12}$~m$^2/$s. Other experiments have resulted in diffusion
coefficients of the same order, e.g., Pepperkok et al.~\cite{Pepperkok1990}
report values between $1.3\cdot 10^{-12}$~m$^2/$s and $1.6\cdot
10^{-12}$~m$^2/$s.

As for the tubulin degradation, Caplow et al.~\cite{Caplow2002} report the
half-times of 9.6~h for tubulin-GTP and 2.4~h for tubulin-GDP, which give the
degradation rates $g=\ln 2/(9.6\cdot 3600)~\text{s}^{-1}\approx 2.0\cdot
10^{-5}~\text{s}^{-1}$ and $5\cdot 10^{-6}~\text{s}^{-1}$, respectively. Miller
and Samuels~\cite[Table~1]{Miller1997} report several published experimental
results with half-times of 14--75~days, which correspond to $g$ between
$1.1\cdot 10^{-7}~\text{s}^{-1}$ and $5.7\cdot 10^{-7}~\text{s}^{-1}$. It seems
to be of interest to investigate the outputs of the model for a large interval
of values of $g$.

As in the model by McLean and Graham~\cite{McLean2004}, we condense all
complicated building processes in the growth cone and assume that the axonal
elongation can be described by Equation~\eqref{eq:dldt2}. Appropriate values
should be found for the concentration-rate coefficient $\rg$ and either the
steady-state concentration $\cinf$ or the maximum speed of shrinkage
$\sg=\rg\cinf$. It is sometimes argued that the speed of axonal elongation from
the soma is similar, or equal, to the speed of growth $\upsilon_+$ of the
plus-end assembly of individual microtubules \cite{Graham2006,Mitchison1984a}.
Published experiments have shown that the growth speeds $\upsilon_+$ and
$\upsilon_-$ for the plus and minus end, respectively, of an independent
microtubule depend on the surrounding free tubulin concentration $\cc$
according to the affine relationships
\begin{align}
\upsilon_+&=\alpha_+\cc-\beta_+,\label{eq:v+}\\
\upsilon_-&=\alpha_-\cc-\beta_-,\label{eq:v-}
\end{align}
for positive constants $\alpha_+>\alpha_-$, $\beta_+$ and $\beta_-$
\cite{Mitchison1984a,Walker1988}. The dynamic growth of a microtubule consists
of phases of growth alternated by rapid shortening due to depolymerization. The
shortening phase has a speed that seems to be independent of concentration
\cite{Walker1988}. Furthermore, both the growth and shortening phases contain
periods of pauses.

Mitchison and Kirschner~\cite{Mitchison1984a} report the experimental values
\begin{align*}
&\text{$\alpha_+=0.135~$m$/(\text{M\,min})=2.25\cdot 10^{-6}~\text{m}^4/(\text{mol\,s})$}, \\
&\text{$\alpha_-=0.042~$m$/(\text{M\,min})=0.70\cdot 10^{-6}~\text{m}^4/(\text{mol\,s})$},
\end{align*}
and $\beta_+$ and $\beta_-$ small (near zero). For the same tubulin
preparation, the steady-state concentration, when the average
elongation/shortening of many microtubules is zero, was $14~\mu$M~$=14\cdot
10^{-3}$~mol$/$m$^3$, which together with the obtained constants $\alpha_{\pm}$
means that both ends are growing in steady-state ($\upsilon_+,\upsilon_->0$).
Hence, there are other processes, such as the rapid shortening, that make the
average growing speed be zero. Mitchison and Kirschner~\cite{Mitchison1984a}
argue that in steady-state the majority of the microtubules that grow slowly is
balanced by the minority shrinking rapidly.

Graham et al.~\cite{Graham2006} refer to the value $\alpha_+=2.25\cdot
10^{-6}~\text{m}^4/(\text{mol\,s})$ of Mitchison and
Kirschner~\cite{Mitchison1984a} when they choose $\rg=2.78\cdot
10^{-6}~\text{m}^4/(\text{mol\,s})$ in Equation~\eqref{eq:dldt2} for their
model, which is the one by McLean and Graham~\cite{McLean2004}. Then they
choose $\cinf=10~\mu$M as the common order of concentration.

Walker et al.~\cite{Walker1988} presented further detailed measurements of the
individual microtubules in porcine brain tubulin and obtained the following
values for \eqref{eq:v+}--\eqref{eq:v-}:
\begin{align}
&\text{$\alpha_+=0.33~$m$/(\text{M\,min})=5.5\cdot 10^{-6}~\text{m}^4/(\text{mol\,s})$},\label{eq:a+Walker} \\
&\text{$\beta_+=1.59\cdot 10^{-6}$~m$/\text{min}=2.65\cdot 10^{-8}$~m$/\text{s}$},\label{eq:b+Walker} \\
&\text{$\alpha_-=0.15~$m$/(\text{M\,min})=2.5\cdot 10^{-6}~\text{m}^4/(\text{mol\,s})$},\notag \\
&\text{$\beta_-=0.85\cdot 10^{-6}$~m$/\text{min}=1.42\cdot 10^{-8}$~m$/\text{s}$}.\notag
\end{align}
It is interesting to note that, with these values, the two lines
\eqref{eq:v+}--\eqref{eq:v-} intersect at the concentration $\approx 5~\mu$M,
which is also the concentration at which $\upsilon_+\approx \upsilon_-\approx
0$. The rapid shortening speeds were measured to $27~\mu$m$/$min and
$34~\mu$m$/$min for the plus and minus end, respectively, independently of the
surrounding tubulin concentration. We note that Janulevicius et
al.~\cite{Janul2006} use the values \eqref{eq:a+Walker}, \eqref{eq:b+Walker}
and $27~\mu$m$/$min (from Walker et al.~\cite{Walker1988}) for their model of
dynamic instability. The model consists of a growth phase with the velocity
given by \eqref{eq:v+}, a shrinkage phase with constant shrinkage speed
$27~\mu$m$/$min, and two affine relationships between the frequencies of rescue
and catastrophe as functions of the tubulin concentration, also with
coefficients from the measurements of Walker et al.~\cite{Walker1988}. These
frequencies are then used in a probability density function of an exponential
distribution of waiting times between different possible events.

To obtain the combined average effect of elongation and rapid shortening of the
plus and minus ends, Walker et al.~\cite{Walker1988} took into account the
average times for the different phases and obtained the affine relationships
shown in their Figure~9 for the plus and minus end, respectively. The sum of
these two functions gives an affine relationship for an average microtubule and
shows that the steady-state concentration is $10.9~\mu$M tubulin. Their
corresponding experimentally measured value was in the range 6.9--7.5~$\mu$M.
For the assembly of tubulin dimers in the growth cone of an axon, we are
interested in only the average growth of the plus-ends of microtubules. This
velocity as a function of tubulin concentration is given by a straight line in
\cite[Figure~9]{Walker1988}. We collect the following two points on the line:
$(c_1,\upsilon_1)=(14.43~\mu\text{M},73.42\text{~dimers}/\text{s})$ and
$(c_2,\upsilon_2)=(7.99~\mu\text{M},-113.04\text{~dimers}/\text{s})$. To
express the velocity in SI units, we assume that there are 1625 dimers per
$\mu$m of microtubule as Janulevicius et al.~\cite{Janul2006} do, i.e.\ a dimer
has the length $6.1538\cdot 10^{-10}$~m. This value agrees well with other
values in \cite{Walker1988} for which elongation speeds are expressed in both
m$/$s and dimers$/$s. In SI units, the two collected points are
$(c_1,\upsilon_1)=(0.01443~\text{mol}/\text{m}^3,4.518\cdot
10^{-8}\text{~m}/\text{s})$ and
$(c_2,\upsilon_2)=(0.00799~\text{mol}/\text{m}^3,-6.956\cdot
10^{-8}\text{~m}/\text{s})$ from which we obtain
\begin{align}\label{eq:rgparam}
\rg&=\frac{\upsilon_1-\upsilon_2}{c_1-c_2}=1.783\cdot 10^{-5}~\text{m}^4/(\text{mol\,s}),\\
\sg&=-(\upsilon_1-\rg c_1)=2.121\cdot 10^{-7}~\text{m}/\text{s},
\end{align}
and hence $\cinf=\sg/\rg=11.90\cdot 10^{-3}$~mol$/$m$^3$.

Janulevicius et al.~\cite{Janul2006} estimate from the data of Tanaka and
Sabry~\cite{Tanaka1995} that the volume of a growth cone $\Vc$ lies in the
range 1--200~$\mu$m$^3$. Approximating the growth cone by a sphere, this has
then a radius in the range 0.62--3.6~$\mu$m. We expect our parameter $\lc$ to
be of this order.

The remaining parameter in the dynamic model \eqref{eq:model} is the
polymerization reaction rate constant $\rgtilde$; see \eqref{eq:dldt}. This
constant appears only in the factor $(\rg\cc+\rgtilde\lc)$ in the equation for
the cone concentration in the model\eqref{eq:model}. To define a nominal value
we assume that $\rgtilde\lc$ is of the same order as $\rg\cc$. With
$\lc=4$~$\mu$m, $\cc=\cinf=11.9\cdot 10^{-3}$~mol$/$m$^3$ and $\rg$ given by
\eqref{eq:rgparam}, we get the nominal value $\rgtilde=0.053$~$s^{-1}$.

\section{All steady-state solutions and their properties}\label{sect:SS}

It is interesting to investigate the possible steady-state solutions of a
dynamic model. If the given soma concentration $\cs$ is constant and all
variables $c$, $l$, $\cc$ of the model equations \eqref{eq:model} are assumed
to be independent of time, then we denote the unknown constant axon length by
$\linf$, and note that $\cc=\cinf$ holds by the ODE for $l'(t)$. Then the model
equations \eqref{eq:model} yield the following linear boundary-value problem
for the unknown function $c=c(x)$ and the constant $\linf>0$:
\begin{equation}\label{eq:SSmodel}
\left\{
\begin{aligned}
&D\frac{\mathrm{d}^2c}{\mathrm{d} x^2}-a\dd{c}{x}-gc=0,&\quad&0<x<\linf,\\
&Dc'(\linf)=(a-g\lc)\cinf,&&\\
&c(\linf)=\cinf,&&\\
&c(0) = \cs.&&\\
\end{aligned}
\right.
\end{equation}

Note that the two parameters $\rg$ and $\rgtilde$ are not present in
\eqref{eq:SSmodel}, which means that they only influence dynamic solutions, not
steady-state solutions.

\subsection{The case $a>0$, $D>0$ and $g>0$}

The general solution of the ODE in \eqref{eq:SSmodel} is
\begin{align*}
&c(x)=k_+\rme^{\lambda_+x}+k_-\rme^{\lambda_-x},\\
&\text{where}\quad \lambda_{\pm}:=\frac{a\pm R}{2D}
\quad\text{and}\quad R:=\sqrt{{a^2}+4gD},
\end{align*}
and where $k_{\pm}$ are two constants. We note that
\begin{align*}
&\lambda_-<0<\lambda_+,\quad \lambda_+-\lambda_-=\frac{R}{D},\quad \lambda_++\lambda_-=\frac{a}{D},
\end{align*}
which we will use in the calculations below. The three boundary conditions in
\eqref{eq:SSmodel} give
\begin{equation}\label{eq:system1}
\left\{
\begin{alignedat}3
&Dk_+\lambda_+\rme^{\lambda_+\linf}&&+Dk_-\lambda_-\rme^{\lambda_-\linf}&&=(a-g\lc)\cinf\\
&k_+\rme^{\lambda_+\linf}&&+k_-\rme^{\lambda_-\linf}&&=\cinf \\
&k_+&&+k_-&&=\cs.
\end{alignedat}
\right.
\end{equation}
From the two upper equations of
\eqref{eq:system1} we can express $k_{\pm}$ in terms of $\linf$, for example,
by using Cramer's rule. The coefficient determinant is
\begin{align*}
&\Delta:=D(\lambda_+-\lambda_-)\rme^{\lambda_+\linf}\rme^{\lambda_-\linf}=R\rme^{a\linf/D}.
\end{align*}
Then we get
\begin{align*}
k_+&=\frac{1}{\Delta}
\begin{vmatrix}
(a-g\lc)\cinf &D\lambda_-\rme^{\lambda_-\linf}\\
\cinf&\rme^{\lambda_-\linf}
\end{vmatrix}
=\frac{\cinf\rme^{\lambda_-\linf}}{R\rme^{a\linf/D}}(a-g\lc-D\lambda_-)\\
&=\frac{\cinf\rme^{-\lambda_+\linf}}{R}\left(\frac{R+a}{2}-g\lc\right)
\end{align*}
and analogously
\begin{align*}
k_-&=\frac{1}{\Delta}
\begin{vmatrix}
D\lambda_+\rme^{\lambda_+\linf}&(a-g\lc)\cinf \\
\rme^{\lambda_+\linf}&\cinf
\end{vmatrix}
=\frac{\cinf\rme^{\lambda_+\linf}}{R\rme^{a\linf/D}}(D\lambda_+-a+g\lc)\\
&=\frac{\cinf\rme^{-\lambda_-\linf}}{R}\left(\frac{R-a}{2}+g\lc\right).
\end{align*}
Now the third equation of \eqref{eq:system1} can be written with only one
unknown, $\linf$:
\begin{align}\label{eq:linfeq}
&\frac{1}{R}\left[{\rme^{-\lambda_+\linf}}\left(\frac{R+a}{2}-g\lc\right)
+{\rme^{-\lambda_-\linf}}\left(\frac{R-a}{2}+g\lc\right)\right]=\frac{\cs}{\cinf}.
\end{align}
We can write this equation $f(\linf)={\cs}/\cinf$, where
\begin{align}
f(z)&:=\frac{1}{R}\left[{\rme^{-\lambda_+z}}\left(\frac{R+a}{2}-g\lc\right)
+{\rme^{-\lambda_-z}}\left(\frac{R-a}{2}+g\lc\right)\right],\label{eq:f}\\
f'(z)&=\frac{1}{R}\left[-\lambda_+{\rme^{-\lambda_+z}}\left(\frac{R+a}{2}-g\lc\right)
-\lambda_-{\rme^{-\lambda_-z}}\left(\frac{R-a}{2}+g\lc\right)\right].\label{eq:fprime}
\end{align}
Note that $f(0)=1$, which via \eqref{eq:linfeq} corresponds to $\cinf=\cs$. The
second terms within the squared brackets of both \eqref{eq:f} and
\eqref{eq:fprime} are positive (since $\lambda_-<0$ and $R>a$). This means that
both $f(z),f'(z)\rightarrow\infty$ (exponentially fast) as
$z\rightarrow\infty$. A graph of the function $f(z)$ given by \eqref{eq:f} is
shown in Figure~\ref{fig:f}.
\begin{figure}[tb]
  \small\centering\includegraphics[width=0.4\textwidth]{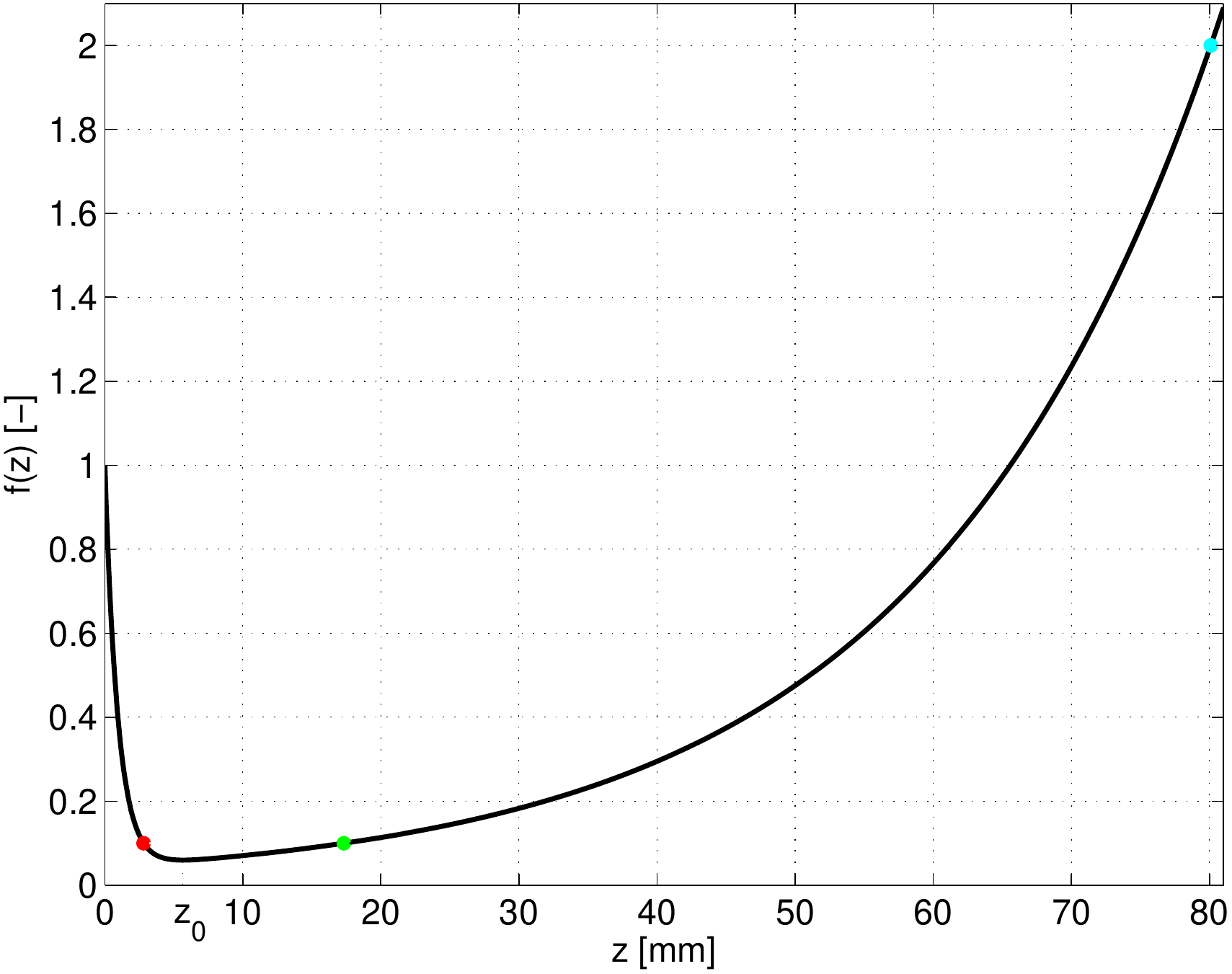}
  \caption{A graph of the function $f(z)$ given by \eqref{eq:f}
  for the nominal values of Table~\ref{table:1},
  which means that Case~I of Theorem~\ref{thm:case1} is valid.
  Note that the steady-state length $\linf$ is obtained from the equation
  $f(\linf)=\cs/\cc$. For a given value of the ratio $\cs/\cc$ on the $f(z)$-axis,
  one can easily read off the number of steady-state solutions and approximately the
  corresponding length(s) $\linf$ on the $z$-axis. For example, when $\cs/\cc=0.1$,
  there are two steady-state solutions corresponding to the red and green dots; see
  Figure~\ref{fig:iii}. As $\cs/\cc=2$, there is a unique steady-state solution (magenta dot);
  see Figure~\ref{fig:ivc}.
  }\label{fig:f}
\end{figure}%
We now investigate two main cases.

Assume first that $g\lc<(R+a)/2$ holds. Then and only then the first term
(within the squared brackets) of \eqref{eq:fprime} is negative, which is
equivalent to the fact that the equation $f'(z_0)=0$ has the unique solution
\begin{align}\label{eq:z_0}
z_0&=\frac{D}{R}\log\frac{(R+a)(R+a-2g\lc)}{(R-a)(R-a+2g\lc)}.
\end{align}
Since $f'(z)\rightarrow\infty$ as $z\rightarrow -\infty$, we can conclude that
$f(z)$ is a unimodal function with $z_0$ as its unique global minimum point;
see Figure~\ref{fig:f}. The nominal parameter values give
$$
g\lc=2\cdot 10^{-12}\text{~m$/$s}~<1\cdot
10^{-8}\text{~m$/$s~}=a.
$$
Then one can compute $R=1.10\cdot 10^{-8}\text{~m$/$s}$ and
\begin{align*}
z_0&=5.64\text{~mm}, \quad f(z_0)=0.060.
\end{align*}
The following equivalence is valid for the number $z_0$ given by
\eqref{eq:z_0}:
\begin{align*}
z_0>0&\quad\Longleftrightarrow\quad
\frac{(R+a)(R+a-2g\lc)}{(R-a)(R-a+2g\lc)}>1 \\
&\quad\Longleftrightarrow\quad g\lc<a.
\end{align*}
Hence, for $g\lc<a<(a+R)/2$ we can conclude that $f(z)$ is decreasing for
$0<z\le z_0$ and increasing for $z\ge z_0$. In particular, $f(z_0)<f(0)=1$
holds. Furthermore, since $g\lc<a<(a+R)/2$, it is clear that $f(z)>0$ for all
$z$, in particular, $f(z_0)>0$. Hence, we may have zero, one or two
steady-state solutions depending on the parameter values. In the subcase $a\le
g\lc<(R+a)/2\Leftrightarrow z_0\le 0$, we conclude that $f(z)$ is increasing
for $z\ge 0$, which implies that there exists a unique solution $\linf>0$ of
\eqref{eq:linfeq} if and only if $f(0)=1<\cs/\cinf$ holds.

Assume now that $g\lc\ge(R+a)/2$ holds. Then the first term of
\eqref{eq:fprime} is also positive, hence $f'(z)>0$ for all $z$, which implies
that \eqref{eq:linfeq} has a unique solution $\linf>0$ if and only if
$f(0)=1<\cs/\cinf$ holds. We conclude the cases in the following theorem.

\begin{theorem}\label{thm:case1}
Assume that $a>0$, $D>0$, $g>0$ hold. Then the following cases may occur:
\begin{itemize}
\item[I.] $g\lc<a$: There exists a $z_0>0$ given by \eqref{eq:z_0}. There
    are four subcases:
\begin{itemize}
\item[i.] $\cs/\cinf<f(z_0)<1$: No steady-state solution exists.
\item[ii.] $\cs/\cinf=f(z_0)$: There exists a unique steady-state
    solution of \eqref{eq:SSmodel} with $\linf=z_0$ and $c(x)$ given by
    the decreasing function
    \begin{multline}\label{eq:SSsolution}
c(x)=\frac{\cinf}{R}\left[
\left(\frac{R+a}{2}-g\lc\right)\rme^{\lambda_+(x-\linf)}\right.\\
\qquad\qquad
\left.+\left(\frac{R-a}{2}+g\lc\right)\rme^{\lambda_-(x-\linf)}
\right],\quad 0\le x\le\linf.
\end{multline}
\item[iii.] $f(z_0)<\cs/\cinf<1$: There exist two steady-state
    solutions. The axon lengths $\linfone$ and $\linftwo$ satisfy
    $0<\linfone<z_0<\linftwo$ and the corresponding concentration
    distributions $c_1(x)$ and $c_2(x)$ are given by
    \eqref{eq:SSsolution}. The function $c_1(x)$ is increasing for
    $0\le x\le\linfone$ and $c_2(x)$ is decreasing for $0\le x\le
    \linftwo-z_0$ and increasing for $\linftwo-z_0\le x\le\linf$.
\item[iv.] $\cs/\cinf\ge 1$: There exists a unique steady-state
    solution of \eqref{eq:SSmodel} with $\linf>z_0$, and $c(x)$ given
    by \eqref{eq:SSsolution} is decreasing for $0<x\le \linf-z_0$ and
    increasing for $\linf-z_0\le x\le\linf$.
\end{itemize}
\item[II.] $g\lc\ge a$: If and only if $\cs>\cinf$ holds, there exists a
    unique solution $\linf>0$ of \eqref{eq:linfeq} and a unique
    steady-state solution of \eqref{eq:SSmodel} given by
    \eqref{eq:SSsolution},
which is a decreasing function.
\end{itemize}
\end{theorem}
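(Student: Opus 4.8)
The plan is to read the theorem as a root-counting problem for the scalar equation $f(\linf)=\cs/\cinf$ on $(0,\infty)$, and then to read off the shape of each profile \eqref{eq:SSsolution} separately. The groundwork laid before the statement already supplies everything I need about $f$ itself — the value $f(0)=1$, the growth $f(z)\to\infty$, the critical point $z_0$ of \eqref{eq:z_0}, and the unimodality — so the proof reduces to organizing these facts, counting roots by the intermediate value theorem, and then settling the monotonicity of the profiles through a reflection identity.

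First I would pin down the behaviour of $f$ on $[0,\infty)$ in each main case. A one-line computation gives $f'(0)=(g\lc-a)/D$, so the sign of $g\lc-a$ decides whether $f$ initially falls or rises; this is exactly the Case~I/Case~II dichotomy (and matches $z_0>0\Leftrightarrow g\lc<a$). In Case~I ($g\lc<a$) unimodality yields that $f$ strictly decreases on $[0,z_0]$ from $1$ to its global minimum $f(z_0)\in(0,1)$ and strictly increases on $[z_0,\infty)$ to $+\infty$. In Case~II ($g\lc\ge a$) the two sub-regimes already isolated above — either $z_0\le0$ (when $a\le g\lc<(R+a)/2$) or $f'>0$ everywhere (when $g\lc\ge(R+a)/2$) — both force $f$ to be strictly increasing on $[0,\infty)$, hence a bijection onto $[1,\infty)$.

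With this picture the subcases are immediate. Writing $\mu:=\cs/\cinf$, strict monotonicity on each branch together with the intermediate value theorem gives at once: no root for $\mu<f(z_0)$ (subcase~i); the single root $\linf=z_0$ for $\mu=f(z_0)$ (subcase~ii); one root in $(0,z_0)$ and one in $(z_0,\infty)$ for $f(z_0)<\mu<1$ (subcase~iii); and, for $\mu\ge1$, a unique root $\linf>z_0$ (subcase~iv), the equality $\mu=1$ being handled by discarding the spurious root $z=0$ since $\linf>0$ is required. In Case~II the bijection onto $[1,\infty)$ yields a root iff $\mu>1$, i.e.\ iff $\cs>\cinf$.

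The last step, determining the profile shapes, is the part I expect to need the most care, and the device that makes it painless is the reflection identity $c(x)=\cinf\,f(\linf-x)$, obtained by substituting $z=\linf-x$ in \eqref{eq:f} and comparing with \eqref{eq:SSsolution}. Differentiating gives $c'(x)=-\cinf\,f'(\linf-x)$, so $c'$ vanishes only where $f'=0$, i.e.\ at the single interior point $x=\linf-z_0$ (when it lies in $(0,\linf)$); equivalently, the orientation-reversing substitution $x\mapsto\linf-x$ carries the single interior minimum of $f$ at $z_0$ to a single interior minimum of $c$ at $\linf-z_0$. Thus in Case~I the profile decreases on $(0,\linf-z_0)$ and increases on $(\linf-z_0,\linf)$ when $\linf>z_0$ (subcase~iv and the longer profile $c_2$ of subcase~iii), while for $\linf\le z_0$ the turning point falls at or to the left of the origin, so the decreasing branch degenerates and only the single (increasing) branch survives on $(0,\linf)$ — this covers the shorter profile $c_1$ of subcase~iii and the borderline $\linf=z_0$. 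In Case~II $f$ is increasing on all of $[0,\linf]$, so the same identity makes $c$ strictly decreasing, consistent with $c(0)=\cs>\cinf=c(\linf)$. The only genuine bookkeeping is matching the turning point $\linf-z_0$ against the endpoints $0$ and $\linf$ in each subcase; once the identity is in hand, no further differentiation of the exponentials is needed.
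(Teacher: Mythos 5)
Your proposal is correct and follows the same overall structure as the paper's argument: the root-counting part leans on exactly the pre-theorem facts the paper itself uses (unimodality of $f$, $f(0)=1$, $f(z)\to\infty$, and the equivalence $z_0>0\Leftrightarrow g\lc<a$), and the profile-shape part reduces, as in the paper, to locating a unique critical point at $\linf-z_0$ and comparing it with the endpoints of $[0,\linf]$. The one genuine difference is how that critical point is obtained: the paper differentiates \eqref{eq:SSsolution} directly, obtaining \eqref{eq:cprime}, solves $c'(x_0)=0$ to get $x_0=\linf-z_0$, and then re-splits into the regimes $g\lc<(R+a)/2$, $a\le g\lc<(R+a)/2$ and $g\lc\ge(R+a)/2$; you instead observe the reflection identity $c(x)=\cinf\,f(\linf-x)$ and transfer the already-established unimodality of $f$, which avoids the second differentiation and handles Case~II uniformly. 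Your computation $f'(0)=(g\lc-a)/D$ is also a clean substitute for the paper's algebraic manipulation of the logarithm in \eqref{eq:z_0} to settle the Case~I/II dichotomy. One point deserves flagging: your conclusion that the profile is \emph{increasing} when $\linf=z_0$ (subcase~ii) contradicts the theorem's literal wording ``decreasing'' --- but you are right and the statement contains a slip there. Indeed $\cs=\cinf f(z_0)<\cinf$ forces the profile to rise from $\cs$ at $x=0$ to $\cinf$ at $x=\linf$, and both the paper's own proof (which gives $c'(x)>0$ for $x>x_0=\linf-z_0=0$) and its later discussion of Case~I.ii (the profile is ``similar to the graph $c_1(x)$'', which is increasing) confirm the increasing shape.
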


\begin{proof}
What remains to be proved are the monotonicity properties of $c(x)$ in the
different cases. Differentiation of \eqref{eq:SSsolution} gives
\begin{multline}\label{eq:cprime}
c'(x)=\frac{\cinf}{R}\left[
\left(\frac{R+a}{2}-g\lc\right)\lambda_+\rme^{\lambda_+(x-\linf)}\right.\\
\qquad\qquad
\left.+\left(\frac{R-a}{2}+g\lc\right)\lambda_-\rme^{\lambda_-(x-\linf)}
\right],\quad 0\le x\le\linf.
\end{multline}
We first note that, within the squared brackets, the second term is negative,
since $\lambda_-<0$. If $g\lc<(R+a)/2$, then the equation $c'(x_0)=0$ has the
unique solution
\begin{align}\label{eq:x_0}
x_0&=\linf-\frac{D}{R}\log\frac{(R+a)(R+a-2g\lc)}{(R-a)(R-a+2g\lc)}=\linf-z_0.
\end{align}
Furthermore, $c'(x)\lessgtr 0$ for $x\lessgtr x_0$. Note that this covers
Case~I, since $g\lc<a<(R+a)/2$. If $a\le g\lc<(R+a)/2$ holds, then $z_0\le 0$,
so that $x_0\ge\linf$. Hence $c(x)$ is decreasing for $0\le x\le\linf$. The
remaining case is $g\lc\ge (R+a)/2$. Then the first term of \eqref{eq:cprime}
(within the squared brackets) is negative, so that $c'(x)<0$ holds.
\end{proof}

With the non-zero values of the parameters given in Table~\ref{table:1}, the
different types of steady-state solutions are given by Theorem~\ref{thm:case1}.
We now demonstrate and comment on the different cases of
Theorem~\ref{thm:case1} by assigning $\cs$ different values, rather we let the
ratio $\cs/\cinf$ take different values. For the other parameters, we use the
nominal values given in \ref{table:1}. Note how the graph in Figure~\ref{fig:f}
of the auxiliary function $f(z)$ visualizes the different subcases of Case~I,
since given a value on the ratio $\cs/\cinf$ on the vertical axis, one can read
off the corresponding steady-state length(s) $\linf$ on the horizontal axis.

\paragraph{Case~I.i} If $\cs/\cinf<f(z_0)=0.060$, then
there exists no steady-state solution with $\linf>0$. As can be seen in
Figure~\ref{fig:f}, there exists no solution of the equation $f(z)=\cs/\cinf$
for values of $\cs/\cinf$ that lie below the minimum function value $f(z_0)$.

The biological interpretation is that when the soma concentration satisfies
$\cs<\cinf f(z_0)=7.12\cdot 10^{-4}$~mol$/$m$^3$, it is so small that no growth
can occur. If the soma concentration was larger earlier so that the axon has
reached a certain length and $cs$ drops to a value below $7.12\cdot
10^{-4}$~mol$/$m$^3$, then the dynamic equation for the axon length in
\eqref{eq:model} implies that $l'(t)<0$, i.e., the axon shrinks.

\paragraph{Case~I.ii}
This is a theoretical exceptional case with $\cs/\cinf=f(z_0)=0.060$, which
means that the only solution is the minimum point $z_0$ of $f$; hence the
steady state length is the small number $\linf=z_0=5.64\text{~mm}$. The
concentration distribution along the axon $c(x)$ is similar to the graph
$c_1(x)$ shown in Figure~\ref{fig:iii}.

The biological interpretation is that $\cs=\cinf f(z_0)=7.12\cdot
10^{-4}$~mol$/$m$^3$ is the smallest possible soma concentration that can
result in a stationary axon; however, its length is small.

\begin{figure}[tb]
  \small\centering\includegraphics[width=0.4\textwidth]{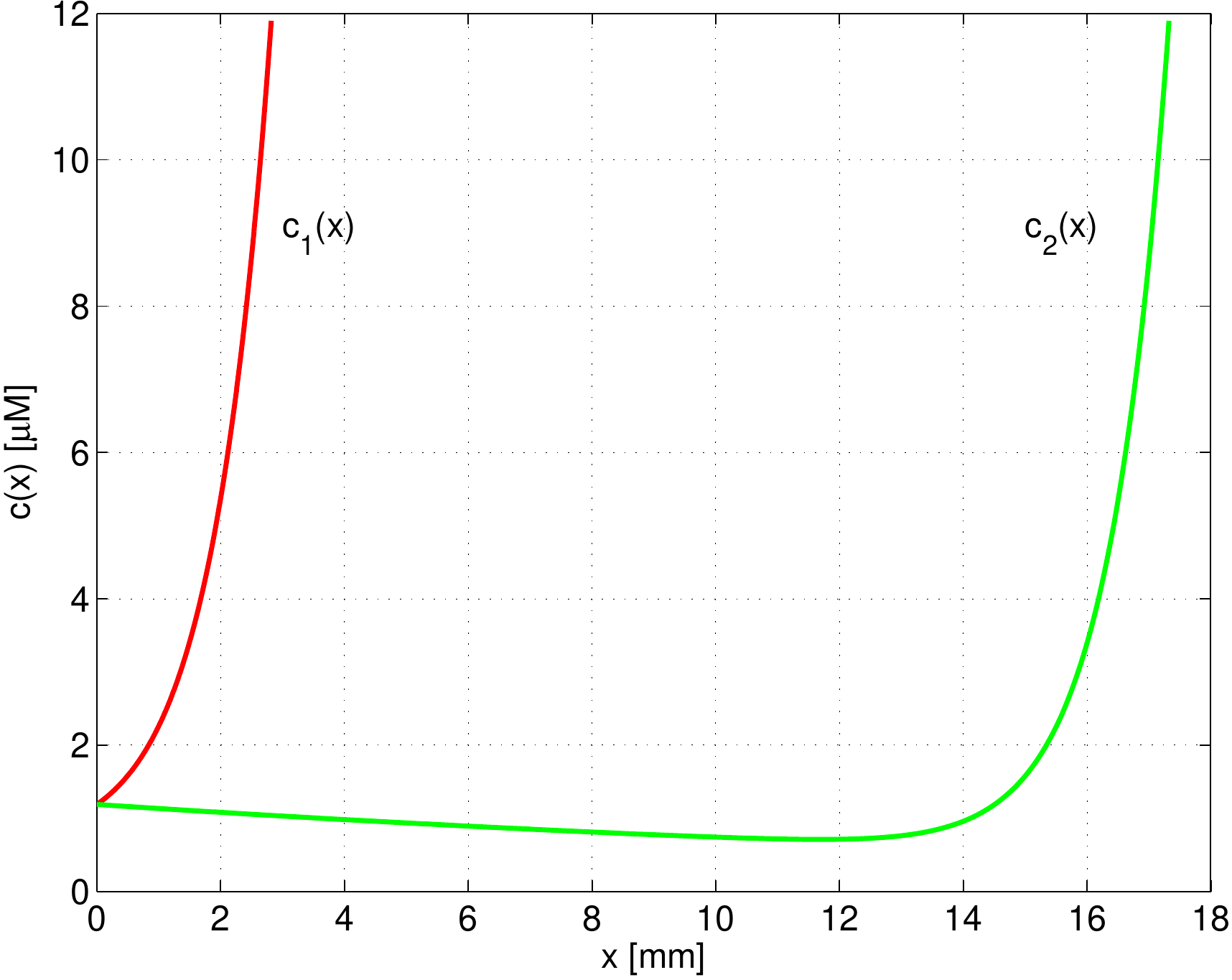}
  \caption{The two possible steady-state solutions in Case~I.iii with the soma concentration
  $\cs=1.19\cdot 10^{-3}$~mol$/$m$^3$, which is lower than the cone concentration
  $\cinf=11.9\cdot 10^{-3}$~mol$/$m$^3$.
  Compare with the coloured dots in Figure~\ref{fig:f}.}\label{fig:iii}
\end{figure}%

\paragraph{Case~I.iii}
In this case $\cs$ satisfies $f(z_0)=0.060<\cs/\cinf<1$. One can see in
Figure~\ref{fig:f} that the equation $f(z)=\cs/\cinf$ then has two solutions;
one is the small number $\linfone<z_0$ and the other $\linftwo>z_0$. As an
example, let $\cs/\cinf=0.1$, which means $\cs=1.19\cdot 10^{-3}$~mol$/$m$^3$.
Then Equation~\eqref{eq:linfeq} can be solved numerically to give
$\linfone=2.82$~mm and $\linftwo=17.3$~mm. The corresponding two concentration
distributions along the axon are shown in Figure~\ref{fig:iii}. In accordance
with Theorem~\ref{thm:case1}, $c_2(x)$ has a minimum point at
$x=\linftwo-z_0=11.7$~mm.

The biological interpretation is that when the soma concentration $\cs$ is
smaller than $\cinf$, then there are in fact two possible increasing
concentration distributions along the axon (hence two possible $\linf$), for
which the flux of positive active transport and negative diffusive transport is
balanced by the degradation along the axon.

\paragraph{Case~I.iv}
In this case $\cs>\cinf$, holds and there is a unique steady-state solution.
For example, $\cs=2\cinf=23.8\cdot 10^{-3}$~mol$/$m$^3$ implies that
$\linf=80.1$~mm and the concentration profile is shown in Figure~\ref{fig:ivc}.
\begin{figure}[tb]
  \small\centering\includegraphics[width=0.4\textwidth]{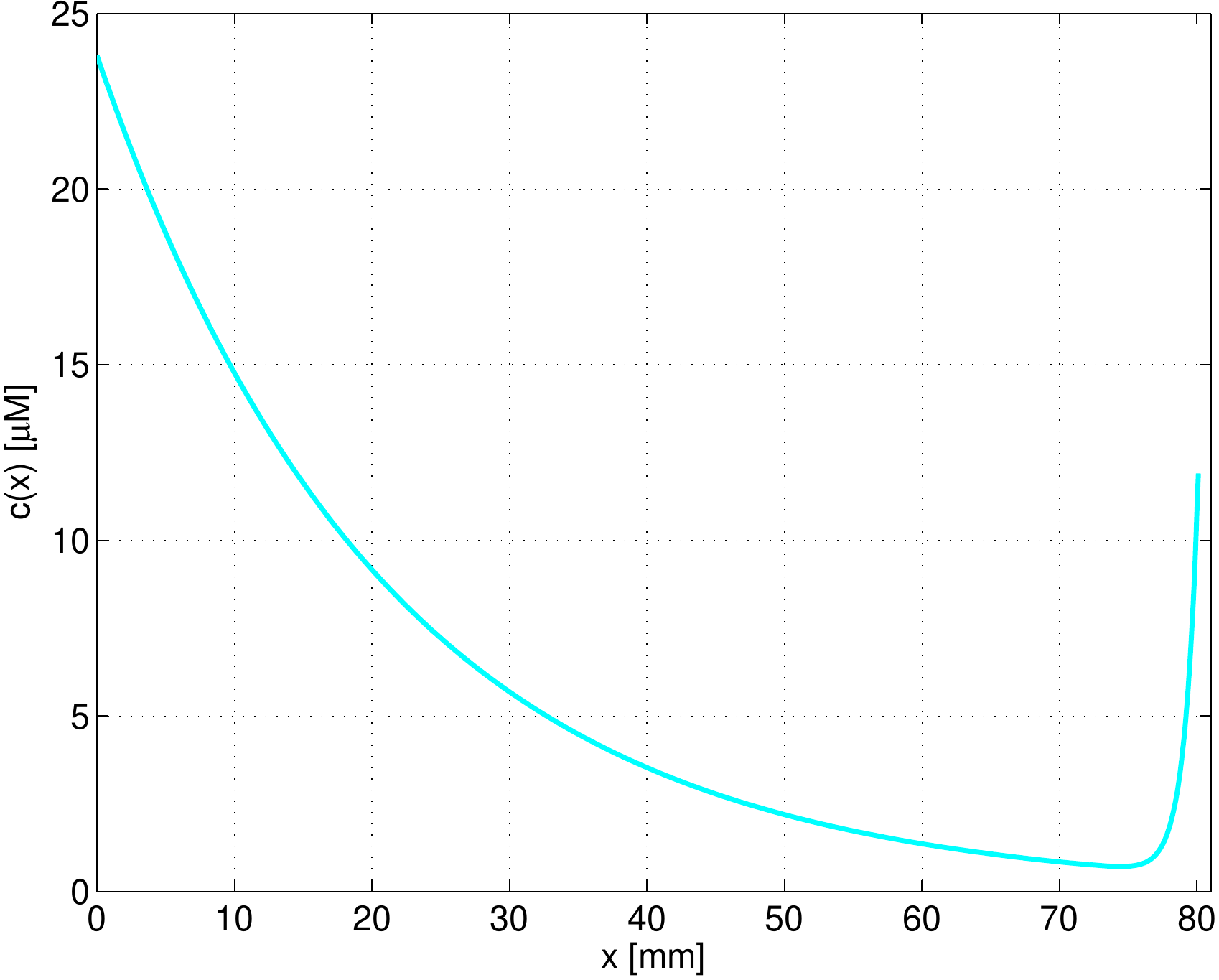}
  \caption{The Case~I.iv with the soma concentration
  $\cs=2\cinf=23.8\cdot 10^{-3}$~mol$/$m$^3$.
  Compare with the cyan dot in Figure~\ref{fig:f}.
  }\label{fig:ivc}
\end{figure}%

The biological interpretation is that this is the normal case with a larger
soma concentration than growth cone concentration. Note the non-monotone
concentration $c(x)$ along the axon and that the concentration is lower than
the growth cone concentration $\cinf$ along most of the axon length (the
interval between $x\approx 30$~mm and $x=\linf=80.1$~mm). The curve $c_2(x)$ in
Figure~\ref{fig:iii} also has the same principle non-monotone form. The
explanation for this non-monotone concentration profile $c(x)$ is that this
very form is precisely such that all the following effects are precisely
balanced: active transport, diffusion, degradation along the axon and in the
growth cone, assembly and disassembly of tubulin dimers. All these effects are
combined and it is therefore difficult to look at one or two separately in
order to get an intuitive feeling for the form of $c(x)$. However, we discuss
this further in Section~\ref{sect:discussion}.

\paragraph{Case~II}
In this case, the active transport coefficient satisfies $a<g\lc=2\cdot
10^{-12}\text{~m$/$s}$. Such small values of $a$ yield, from a biological
points of view, indistinguishable results from the case with $a=0$, which is is
dealt with in the next subsection.

\subsection{The case $a=0$, $D>0$ and $g>0$}

This case with no advective transport is a special case of the one above with
$R=2\sqrt{gD}$ and $\lambda_{\pm}=\pm \sqrt{g/D}$. It is easy to conclude that
$z_0<0$. We are then in Case~II of Theorem~\ref{thm:case1}.

\begin{theorem}\label{thm:case2}
Assume that $a=0$, $D>0$ and $g>0$ hold. If and only if $\cs>\cinf$ holds there
exists a unique solution $\linf>0$ of \eqref{eq:linfeq} and a unique
steady-state solution of \eqref{eq:SSmodel} given by
\begin{multline}\label{eq:ccase2}
c(x)=\frac{\cinf}{2}\left[
\left(1-\lc\sqrt{\frac{g}{D}}\right)\rme^{(x-\linf)\sqrt{\frac{g}{D}}}\right.\\
\qquad\qquad
\left.+\left(1+\lc\sqrt{\frac{g}{D}}\right)\rme^{-(x-\linf)\sqrt{\frac{g}{D}}}
\right],\quad 0\le x\le\linf,
\end{multline}
which is a decreasing function.
\end{theorem}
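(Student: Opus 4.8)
The plan is to recognize this statement as the degenerate case $a=0$ of Theorem~\ref{thm:case1} and to reduce the explicit solution formula accordingly, so that no new analysis is needed beyond an algebraic simplification. First I would substitute $a=0$ into the definitions of $R$ and $\lambda_\pm$, obtaining $R=2\sqrt{gD}$ and $\lambda_\pm=\pm\sqrt{g/D}$, exactly as noted just before the statement. The decisive observation is that $a=0$ forces $g\lc\ge 0=a$, so the hypothesis $g\lc\ge a$ of Case~II of Theorem~\ref{thm:case1} holds automatically and the dichotomy of that theorem can never land in Case~I. Consequently every conclusion of Case~II transfers verbatim: the equation $f(\linf)=\cs/\cinf$ of \eqref{eq:linfeq} has a unique root $\linf>0$ precisely when $f(0)=1<\cs/\cinf$, i.e.\ when $\cs>\cinf$; the corresponding steady-state solution is unique and given by \eqref{eq:SSsolution}; and $c(x)$ is decreasing on $[0,\linf]$.

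It then remains only to verify that \eqref{eq:SSsolution} collapses to the stated formula \eqref{eq:ccase2}. Here I would insert $a=0$, $R=2\sqrt{gD}$, and $1/R=1/(2\sqrt{gD})$ into the two bracketed coefficients of \eqref{eq:SSsolution}. Each prefactor $(R\pm a)/2$ becomes $\sqrt{gD}$, so the coefficients read $\sqrt{gD}\mp g\lc$; factoring the common $\sqrt{gD}$ against the $1/(2\sqrt{gD})$ in front and using $g\lc/\sqrt{gD}=\lc\sqrt{g/D}$ produces precisely the factors $1\mp\lc\sqrt{g/D}$ of \eqref{eq:ccase2}. Replacing $\lambda_\pm$ by $\pm\sqrt{g/D}$ in the two exponentials reproduces \eqref{eq:ccase2} term by term.

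For completeness I would also record the direct route to the monotonicity, independently of invoking the proof of Theorem~\ref{thm:case1}. With $a=0$ the point $z_0$ of \eqref{eq:z_0}, when it exists, satisfies $z_0\le 0$ because the logarithm has argument $(R-2g\lc)/(R+2g\lc)<1$; and when $g\lc\ge\sqrt{gD}$ the derivative $f'$ has no zero and stays positive. In either situation $x_0=\linf-z_0\ge\linf$, so the sign analysis of $c'(x)$ in \eqref{eq:cprime} gives $c'(x)<0$ throughout $(0,\linf)$. There is essentially no genuine obstacle in this proof: the only content is the algebraic reduction to \eqref{eq:ccase2}, and the single point requiring a moment's care is confirming that the case hypothesis $g\lc\ge a$ is satisfied automatically, which is what allows the entire result to be inherited from Case~II of Theorem~\ref{thm:case1}.
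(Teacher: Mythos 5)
Your proposal is correct and follows essentially the same route as the paper, which likewise treats $a=0$ as a special case of the preceding analysis (noting $R=2\sqrt{gD}$, $\lambda_\pm=\pm\sqrt{g/D}$, and that $z_0<0$ places one in Case~II of Theorem~\ref{thm:case1}) and reads off \eqref{eq:ccase2} from \eqref{eq:SSsolution}. Your explicit algebraic verification of the collapse of the coefficients to $1\mp\lc\sqrt{g/D}$ and the direct sign check of $c'(x)$ merely spell out details the paper leaves implicit.
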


In this case the only transport of tubulin from the soma to the growth cone is
diffusion, wherefore a decreasing concentration distribution $c(x)$ is the only
possibility (diffusion occurs from higher to lower concentrations). As an
example, we choose $\cs=4\cinf=47.6\cdot 10^{-3}$~mol$/$m$^3$, which yields the
plots of Figure~\ref{fig:IIc}. The upper plot shows the auxiliary function
$f(z)$, which is now increasing. As is shown by the magenta dot; despite a soma
concentration $\cs$ four times higher than $\cinf$, the steady-state length is
only $\linf=9.22$~mm. The corresponding concentration profile $c(x)$ given by
\eqref{eq:ccase2} is shown in the lower plot of Figure~\ref{fig:IIc}.
\begin{figure}[tb]
  \small
  \centering\includegraphics[width=0.4\textwidth]{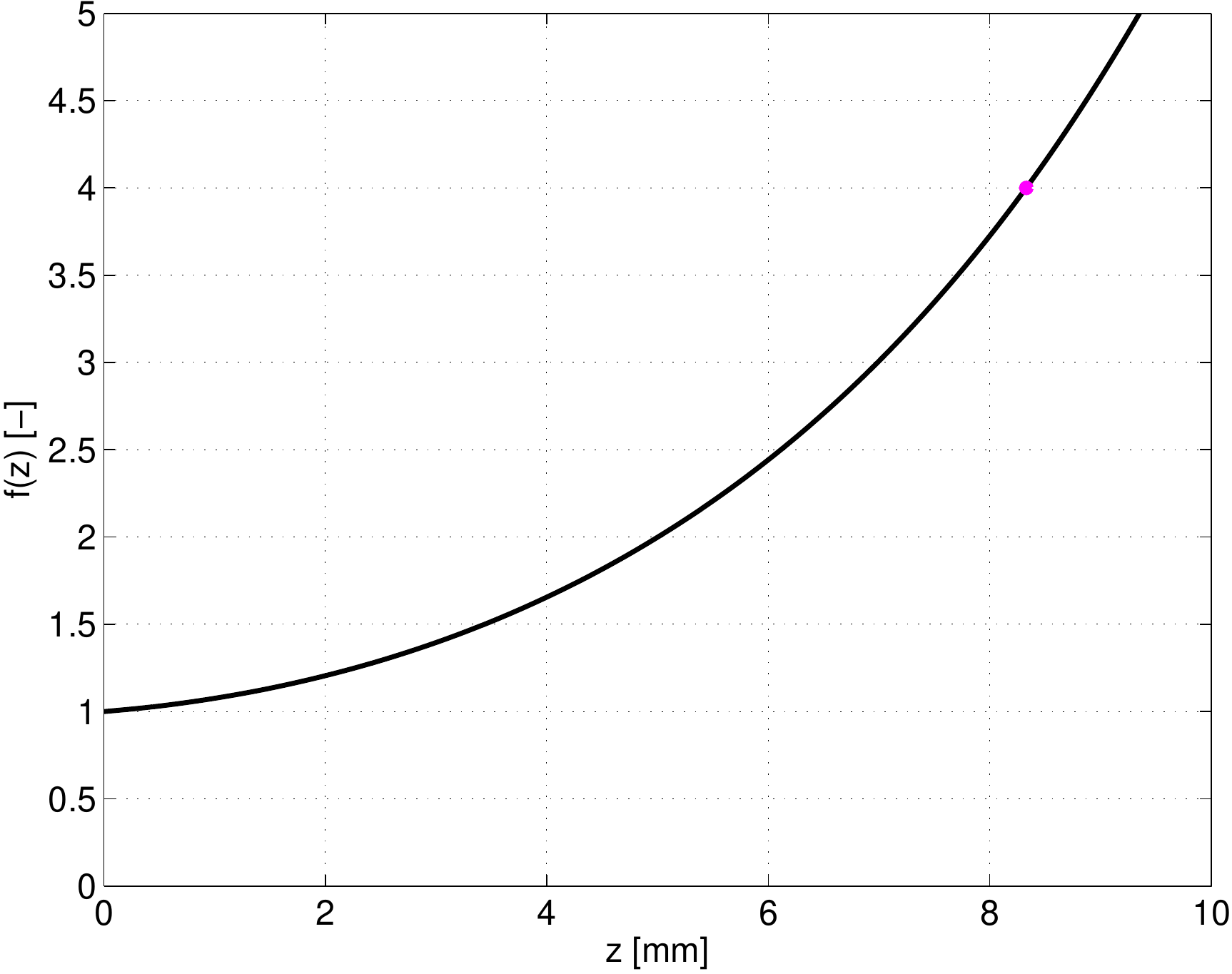}
  \centering\includegraphics[width=0.4\textwidth]{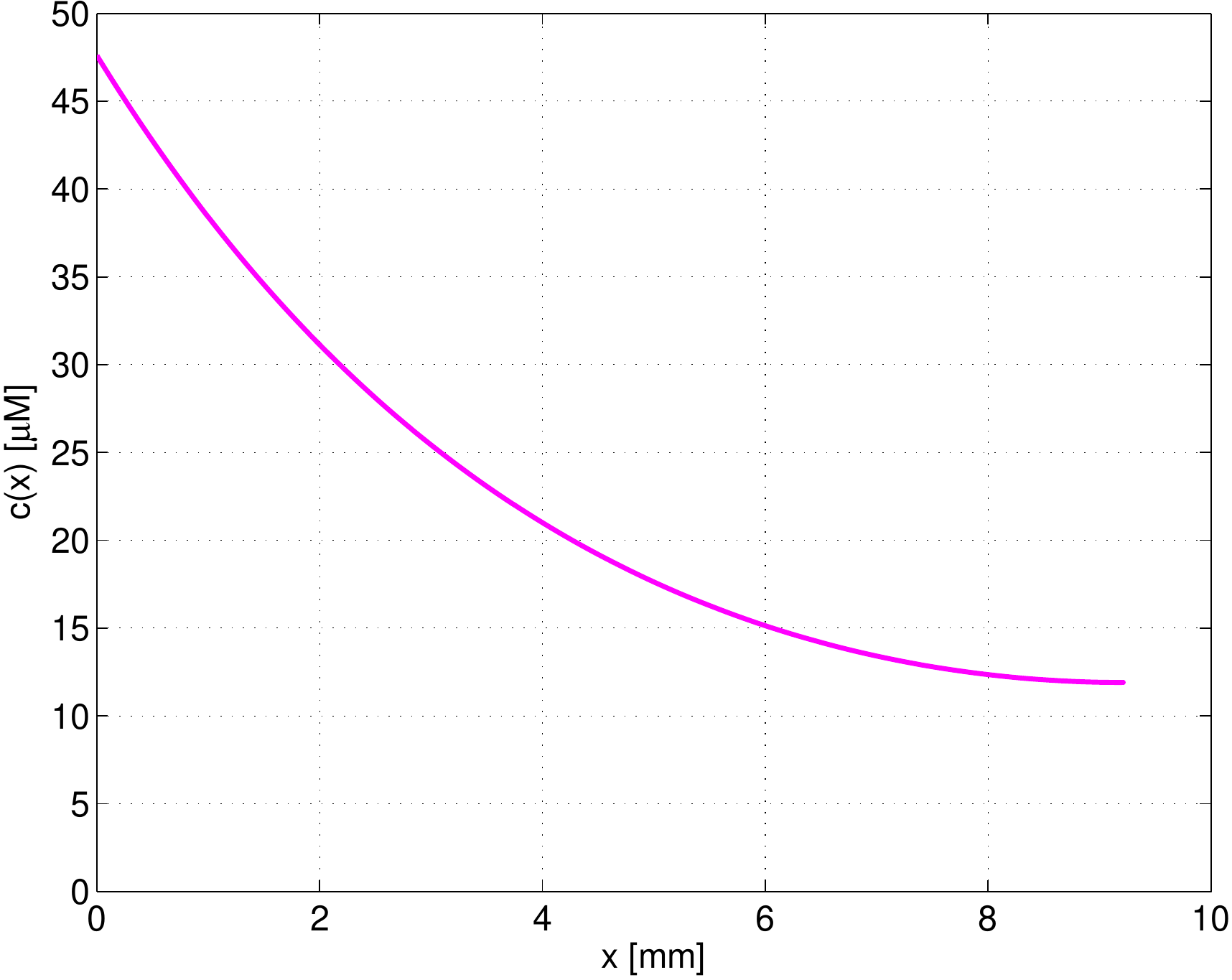}
  \caption{The case when $a=0$ (Theorem~\ref{thm:case2} or
  Case II of Theorem~\ref{thm:case1})
  with the ratio
  $\cs/\cinf=4$. Since the only transportation of tubulin occurs
  with diffusion, the steady-state length is much shorter than when active
  transport also is present ($a>0$).}\label{fig:IIc}
\end{figure}%

\subsection{The case $a>0$, $D>0$ and $g=0$}

In this case we have $R=a$, $\lambda_+=a/D$ and $\lambda_-=0$. Then
\eqref{eq:linfeq} is reduced to
$$
a\mathrm{e}^{-a\linf/D}=\frac{R\cs}{\cinf}\quad\Longleftrightarrow\quad
\linf=\frac{D}{a}\log\frac{\cinf}{\cs}.
$$
\begin{theorem}\label{thm:case3}
Assume that $a>0$, $D>0$ and $g=0$. If and only if $\cs<\cinf$ holds there
exists a unique solution $\linf>0$ of \eqref{eq:linfeq} and a unique
steady-state solution of \eqref{eq:SSmodel} given by
\begin{align*}
c(x)={\cinf}\rme^{(x-\linf)a/D}=\cs\rme^{xa/D},\quad 0<x\le\linf=\frac{D}{a}\log\frac{\cinf}{\cs}.
\end{align*}
\end{theorem}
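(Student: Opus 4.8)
The plan is to exploit the degeneracy that $g=0$ forces on the characteristic roots: since $R=\sqrt{a^2+4gD}=a$, one has $\lambda_+=a/D$ and $\lambda_-=0$, so the second exponential mode collapses to a constant and the boundary-value problem \eqref{eq:SSmodel} becomes explicitly solvable. Writing the general solution of $D c''-a c'=0$ as
\begin{align*}
c(x)=k_+\rme^{ax/D}+k_-,
\end{align*}
I would determine the two constants $k_\pm$ together with the unknown length $\linf$ from the three boundary conditions of \eqref{eq:SSmodel}, now taken at $g=0$.

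First I would impose the flux condition $Dc'(\linf)=a\cinf$. Since $c'(x)=(a/D)k_+\rme^{ax/D}$, this reads $a k_+\rme^{a\linf/D}=a\cinf$, i.e.\ $k_+\rme^{a\linf/D}=\cinf$. Feeding this into the Dirichlet condition $c(\linf)=k_+\rme^{a\linf/D}+k_-=\cinf$ immediately forces $k_-=0$, and then $c(0)=k_++k_-=\cs$ gives $k_+=\cs$. Equating the two expressions for $k_+$ yields $\cs=\cinf\rme^{-a\linf/D}$, which is exactly the reduced form of \eqref{eq:linfeq} recorded just above the theorem; solving it gives $\linf=(D/a)\log(\cinf/\cs)$. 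Back-substitution produces $c(x)=\cs\,\rme^{ax/D}$, and since $\cs=\cinf\rme^{-a\linf/D}$ the same profile can be rewritten as $c(x)=\cinf\,\rme^{a(x-\linf)/D}$, so the two stated expressions agree.

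It then remains to settle existence, uniqueness and the ``if and only if''. Because $a,D>0$, the closed form $\linf=(D/a)\log(\cinf/\cs)$ is single-valued, so the length is unique whenever it is admissible; and $\linf>0\Leftrightarrow\log(\cinf/\cs)>0\Leftrightarrow\cs<\cinf$, which is precisely the asserted equivalence (for $\cs\ge\cinf$ the formula returns $\linf\le0$, so no positive steady state exists). One checks directly that $c(x)=\cs\rme^{ax/D}$ satisfies all three boundary conditions, and being a positive exponential with positive exponent it is increasing, consistent with purely advective--diffusive transport into a higher cone concentration. I do not expect a genuine obstacle in this case, which is the simplest of the three; the single point deserving care is the collapse of the $\lambda_-$ mode to a constant, so that one must derive the constants afresh rather than invoke the formulas of Theorem~\ref{thm:case1}, which were established under the standing hypothesis $g>0$ and do not formally cover $g=0$.
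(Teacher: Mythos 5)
Your proposal is correct and takes essentially the same route as the paper: the paper obtains the identical reduced equation $\rme^{-a\linf/D}=\cs/\cinf$ (hence $\linf=(D/a)\log(\cinf/\cs)$, positive iff $\cs<\cinf$) by specializing the general formulas \eqref{eq:linfeq} and \eqref{eq:SSsolution} at $R=a$, $\lambda_+=a/D$, $\lambda_-=0$, which is exactly the computation you perform from scratch. Your re-derivation of the constants instead of citing the formulas of Theorem~\ref{thm:case1} (presented under the standing hypothesis $g>0$) is sound and, if anything, slightly more careful than the paper's implicit appeal to them.
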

Since there is no degradation of tubulin, the total flux is zero at every point
of the $x$-axis. This means that the active transport in the direction of
increasing $x$-values is precisely balanced by diffusion in the opposite
direction, which is possible if and only if $c(x)$ is increasing.

\subsection{The case $D>0$ and $a=g=0$}

In this case, the solution of the ODE in \eqref{eq:SSmodel} is the affine
function $c(x)=k_1x+k_2$, where $k_1$ and $k_2$ are constants to be determined
by the boundary conditions of \eqref{eq:SSmodel}. One finds that for any value
of $\linf>0$ the only possible solution is
\begin{align}\label{eq:constsol}
&c(x)=\cs=\cinf,\quad 0<x\le\linf.
\end{align}
Hence, there is neither any active ($a=0$) nor any diffusive transport, since
the concentration is the same along the axon; the diffusive flux is
$-Dc_x=-D\cdot 0=0$. Since there is no degradation, this case has probably no
biological interest.

\subsection{The cases with $D=0$}\label{sect:D=0}

If there is no diffusion present, but $a>0$ and $g>0$, then the first boundary
condition in \eqref{eq:SSmodel} requires that $a=g\lc$ holds, which one cannot
expect to be fulfilled. However, if $a=g\lc$ holds, then the steady-state
solution satisfying the boundary conditions is
\begin{align*}
c(x)={\cinf}\rme^{(\linf-x)/\lc}=\cs\rme^{-x/\lc},\quad 0<x\le\linf=\lc\log\frac{\cs}{\cinf}.
\end{align*}
Hence, $\linf>0$ if and only if $\cs>\cinf$, and then the concentration is
decreasing along the axon.

In fact, when diffusion is not present, the extra assumption of a continuous
concentration distribution is unnatural. When active transport is the only
movement of a substance, then all waves, including discontinuities, in the
concentration profile are transported with the velocity $a$. In fact, the PDE
of \eqref{eq:model} is hyperbolic and solutions of such may contain
discontinuities travelling with the speed $a$ during dynamic situations. When
there is no diffusion that can smooth out sharp gradients, one cannot exclude
the case $c^-\neq\cc$. Removing the continuity assumption,
Equation~\eqref{eq:dccdt1} should be used for the dynamics of the cone
concentration $\cc$ instead of the second equation of \eqref{eq:model} where
the assumption $c^-=\cc$ have been used. Then the boundary condition at
$x=l(t)$ in \eqref{eq:SSmodel} should be replaced by $Dc_x^-=ac^--g\lc\cc$.
Hence, when $D=0$ we have $ac^-=g\lc\cc$. This is a much more flexible
condition that is fulfilled for all biologically possible parameter values. The
steady-state problem is then reduced to
\begin{equation}\label{eq:SSdisc}
\left\{
\begin{aligned}
&-a\dd{c}{x}-gc=0,&\quad&0<x<\linf,\\
&ac^-=g\lc\cinf,&&\\
&c(0) = \cs.&&\\
\end{aligned}
\right.
\end{equation}
Assume that $a>0$ and $g>0$. We first note that since any discontinuity in the
interval $(0,l(t))$ has the positive speed $a$, there exists no stationary
discontinuity. The only possible discontinuity in steady state is at $x=\linf$.
Straightforward calculations give the following theorem.
\begin{theorem}\label{thm:SSdisc}
Assume that $D=0$, $a>0$ and $g>0$ and consider the axonal-growth problem
without the assumption that the concentration $c(x,t)$ is a continuous
function, i.e., problem \eqref{eq:model} with the ODE for $\cc$ replaced by
\eqref{eq:dccdt1}. If and only if $\cs>\cinf g\lc/a$ holds there exists a
unique steady-state solution of \eqref{eq:SSdisc} given by
\begin{equation}\label{eq:discsol}
c(x)=\cs\rme^{-gx/a},\quad\text{and}\quad\linf=\frac{a}{g}\log\frac{a\cs}{g\lc\cinf}.
\end{equation}
\end{theorem}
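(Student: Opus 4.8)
The plan is to solve the first-order linear ODE in \eqref{eq:SSdisc} explicitly, then impose the two remaining conditions to pin down the solution and derive the stated existence criterion. First I would note that the equation $-a\,c'(x)-gc=0$ with $a>0$ separates immediately: writing it as $c'(x)=-(g/a)c$, the general solution is $c(x)=k\,\rme^{-gx/a}$ for a constant $k$. The condition $c(0)=\cs$ forces $k=\cs$, giving $c(x)=\cs\rme^{-gx/a}$ on $(0,\linf)$; since $g>0$ this is a strictly decreasing exponential, as the theorem's formula in \eqref{eq:discsol} asserts.

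The second step is to use the flux/boundary condition $ac^-=g\lc\cinf$ to locate $\linf$. Here $c^-$ denotes the left limit $c(\linf^-)$, so from the explicit solution $c^-=\cs\rme^{-g\linf/a}$. Substituting into $ac^-=g\lc\cinf$ yields
\begin{equation*}
a\cs\rme^{-g\linf/a}=g\lc\cinf,
\end{equation*}
and solving for $\linf$ gives $\rme^{-g\linf/a}=g\lc\cinf/(a\cs)$, hence
\begin{equation*}
\linf=\frac{a}{g}\log\frac{a\cs}{g\lc\cinf},
\end{equation*}
matching \eqref{eq:discsol}. The only subtlety is existence: this $\linf$ is a genuine positive axon length exactly when the argument of the logarithm exceeds $1$, i.e.\ when $a\cs>g\lc\cinf$, which rearranges to the stated threshold $\cs>\cinf g\lc/a$. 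When $\cs\le\cinf g\lc/a$ the formula returns $\linf\le 0$, so no steady state with positive length exists, establishing the ``if and only if'' direction.

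I would close by justifying the remark preceding the statement, namely that the only admissible discontinuity sits at $x=\linf$. Since the transport is purely advective with speed $a>0$, any jump in $c$ propagates to the right at speed $a$ and therefore cannot be stationary in the open interval $(0,\linf)$; hence $c$ is smooth (indeed the exponential above) on $(0,\linf)$ and the continuity assumption $c^-=\cc$ may legitimately fail only at the growth-cone boundary, which is exactly what the replacement condition $ac^-=g\lc\cc$ accommodates. None of these steps is genuinely hard; the main thing to get right is the bookkeeping of the boundary condition $ac^-=g\lc\cinf$ (with $\cc=\cinf$ in steady state) rather than the continuity condition used in the $D>0$ case, since it is this single change that turns the restrictive requirement $a=g\lc$ into the flexible existence threshold $\cs>\cinf g\lc/a$.
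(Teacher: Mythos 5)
Your proof is correct and follows essentially the same route as the paper, which simply notes that ``straightforward calculations'' (solving the first-order ODE $-ac'-gc=0$ with $c(0)=\cs$, then imposing $ac^-=g\lc\cinf$ to determine $\linf$ and reading off the positivity threshold $\cs>\cinf g\lc/a$) yield the theorem. Your closing remark on why no stationary discontinuity can sit in the interior $(0,\linf)$ likewise matches the paper's own argument preceding the theorem, so there is nothing to add.
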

Note that $c(\linf)=\cs\rme^{-g\linf/a}=g\lc\cinf/a\neq\cinf$, unless $a=g\lc$
holds, which is the special case above when the solution is continuous. The
cases when also either $a=0$ or $g=0$ are trivial and biologically
uninteresting.

\section{The steady-state solutions' dependence on each parameter}\label{sect:plotsSSpar}

Given the nominal parameter values of a steady-state solution (see
Table~\ref{table:1}), we shall now investigate the sensitivity of the
steady-state solutions with respect to each parameter.

\begin{figure}[tb]
  \small
  \centering\includegraphics[width=0.4\textwidth]{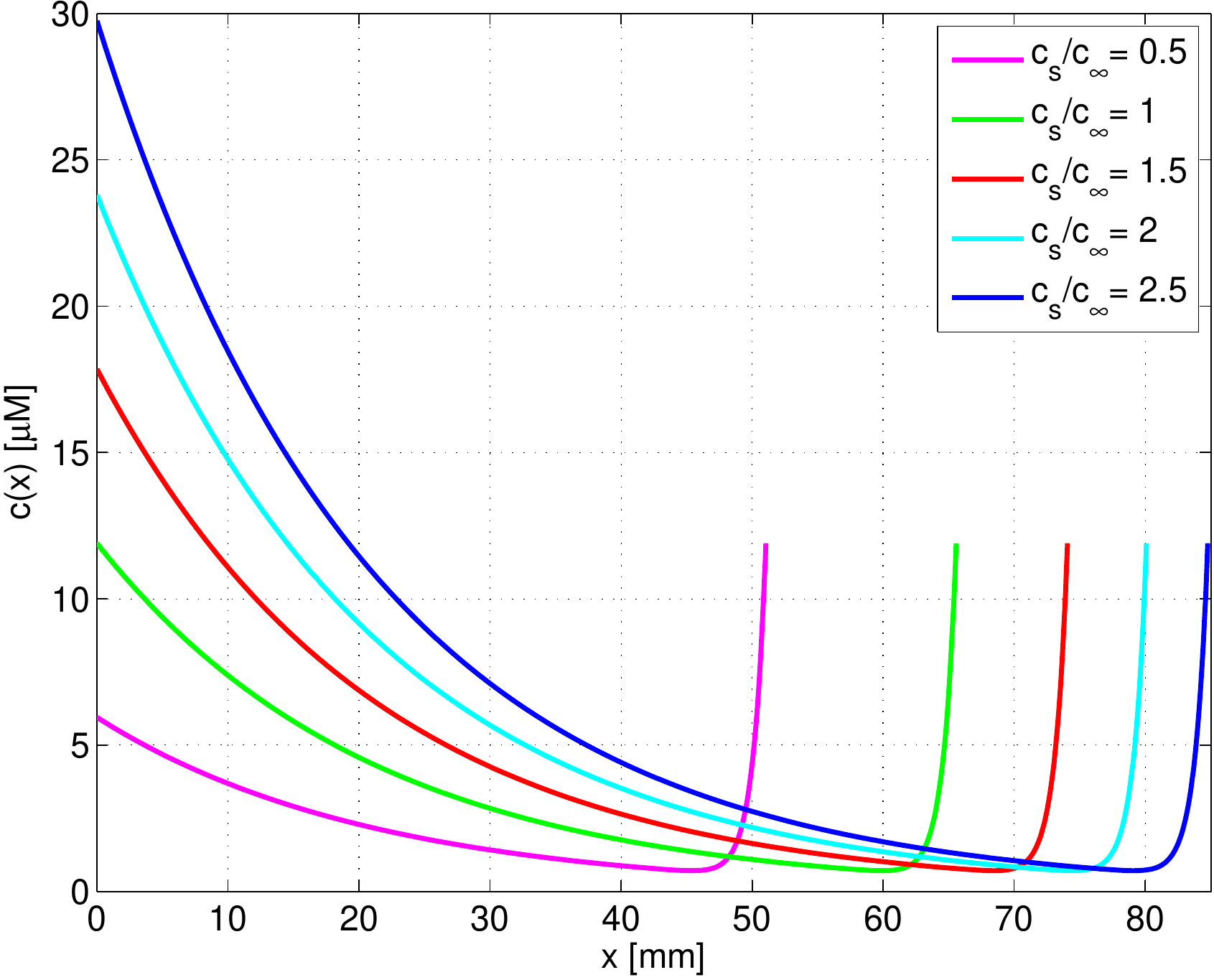}
  \caption{Concentration profiles for varying $\cs$.
  Recall that $\cinf=11.9~\mu$M.
  The magenta graph ($\cs/\cinf=0.5$)
  is $c_2(x)$ of Case~iii and the other belong to
  Case~iv.}\label{fig:varyc0}
\end{figure}%

In Figure~\ref{fig:varyc0}, the concentration profiles along the axon are shown
when $\cs$ is varied and the other parameters are the nominal ones. It is
interesting to note that for the nominal values, the steady-state length
$\linf=65.6$~mm when the soma concentration $\cs$ is equal to the steady-state
cone concentration $\cinf=11.9~\mu$M.

\begin{figure}[tb]
  \small
  \centering\includegraphics[width=0.4\textwidth]{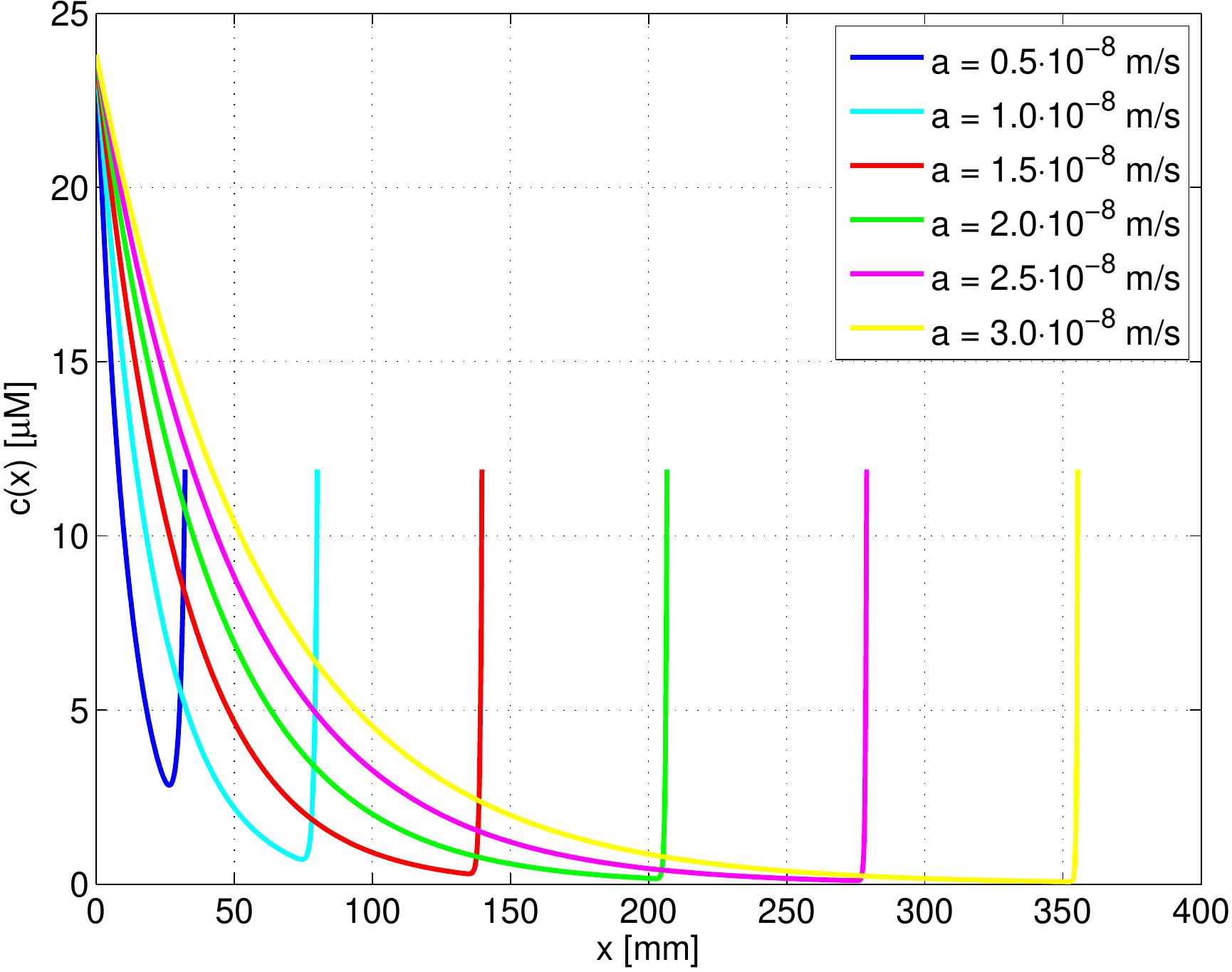}
  \includegraphics[width=0.4\textwidth]{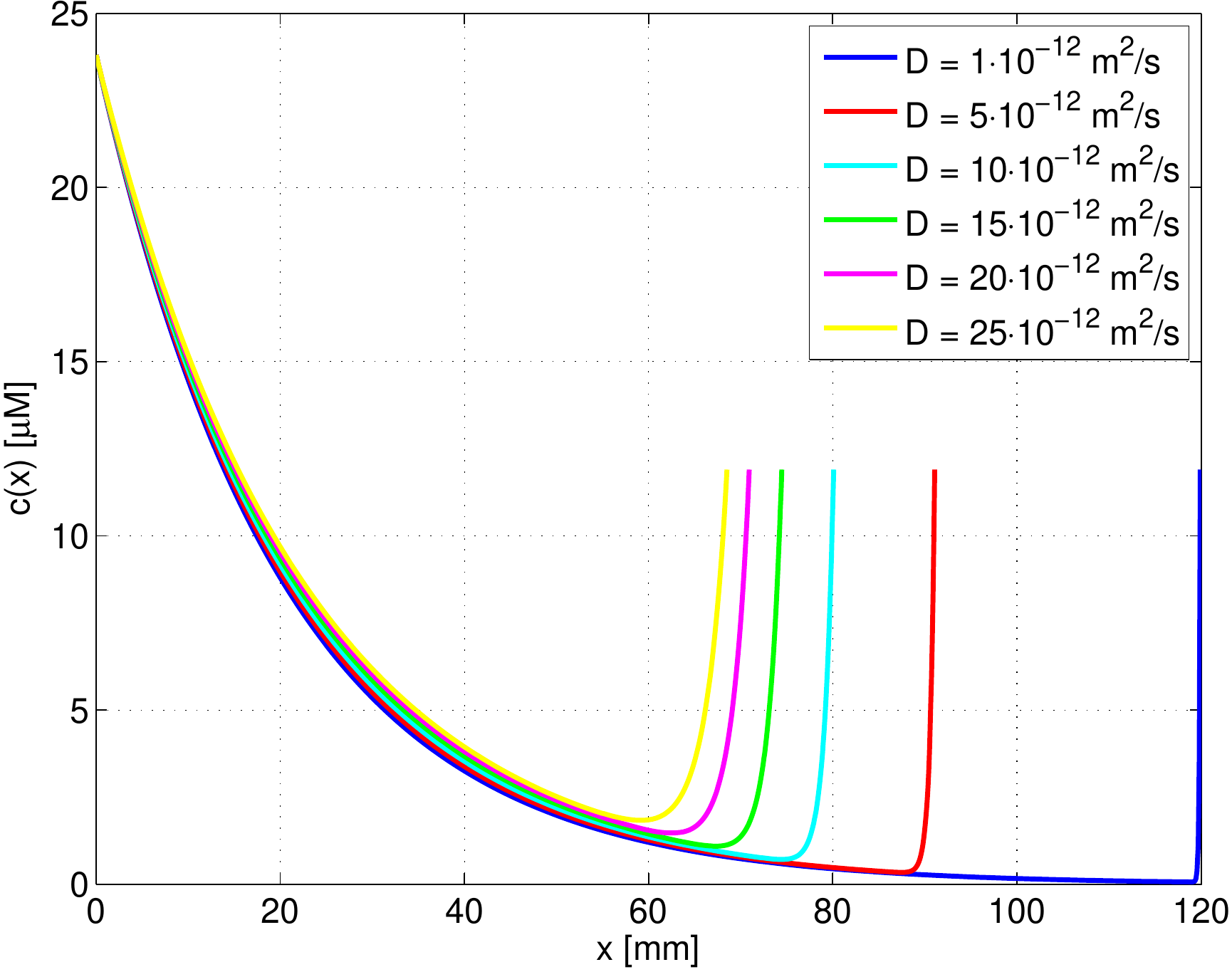}
  \caption{Concentration profiles in Case~I.iv for varying
  $a$ and $D$, respectively.}\label{fig:varya}
\end{figure}%

Now we keep the ratio $\cs/\cinf=2$, i.e., $\cs=2\cinf=23.8\cdot
10^{-3}$~mol$/$m$^3$ and vary the other steady-state parameters. This
corresponds to Case~I.iv in Theorem~\ref{thm:case1}.

\begin{figure}[tb]
  \small
  \centering
  \includegraphics[width=0.4\textwidth]{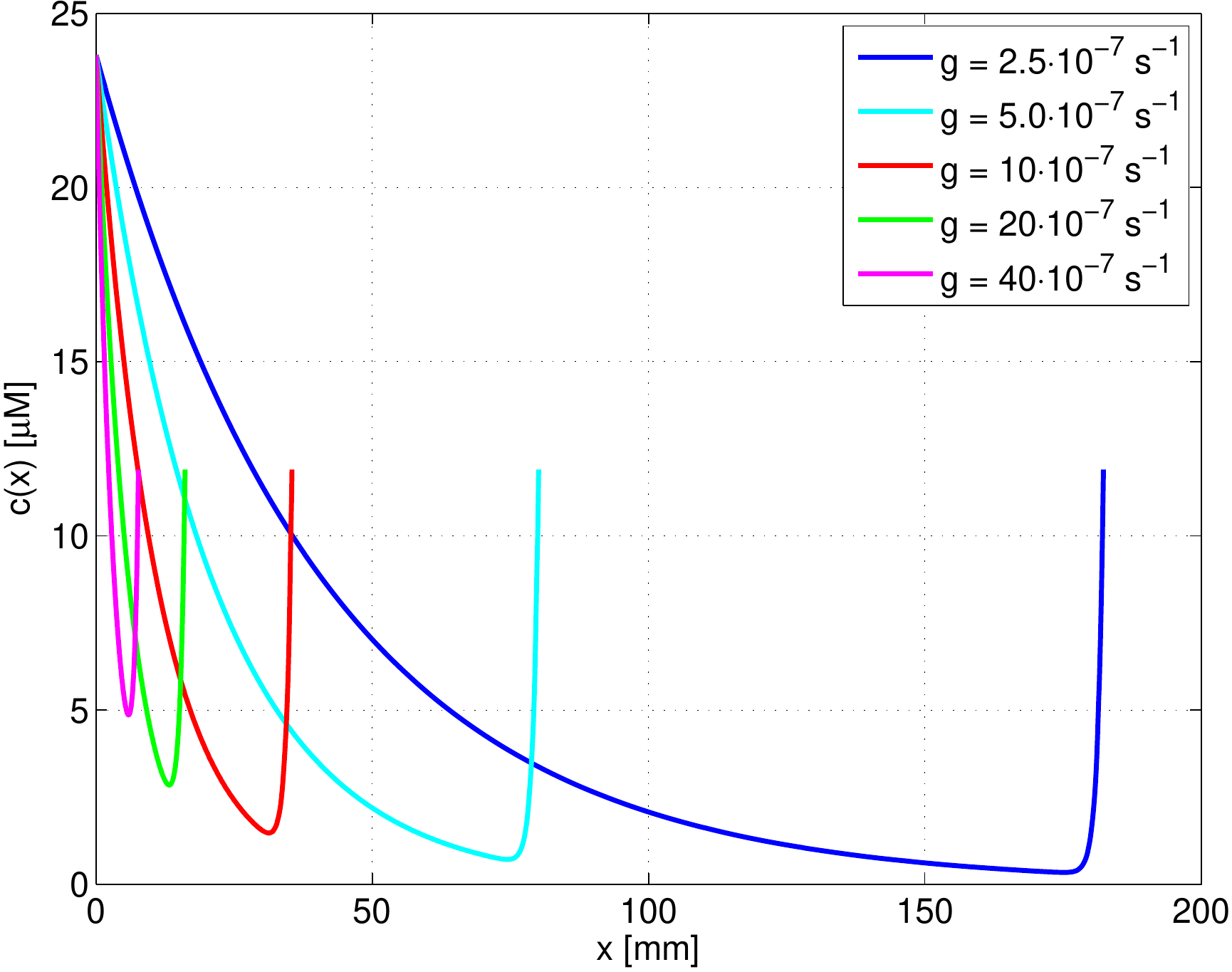}
  \includegraphics[width=0.4\textwidth]{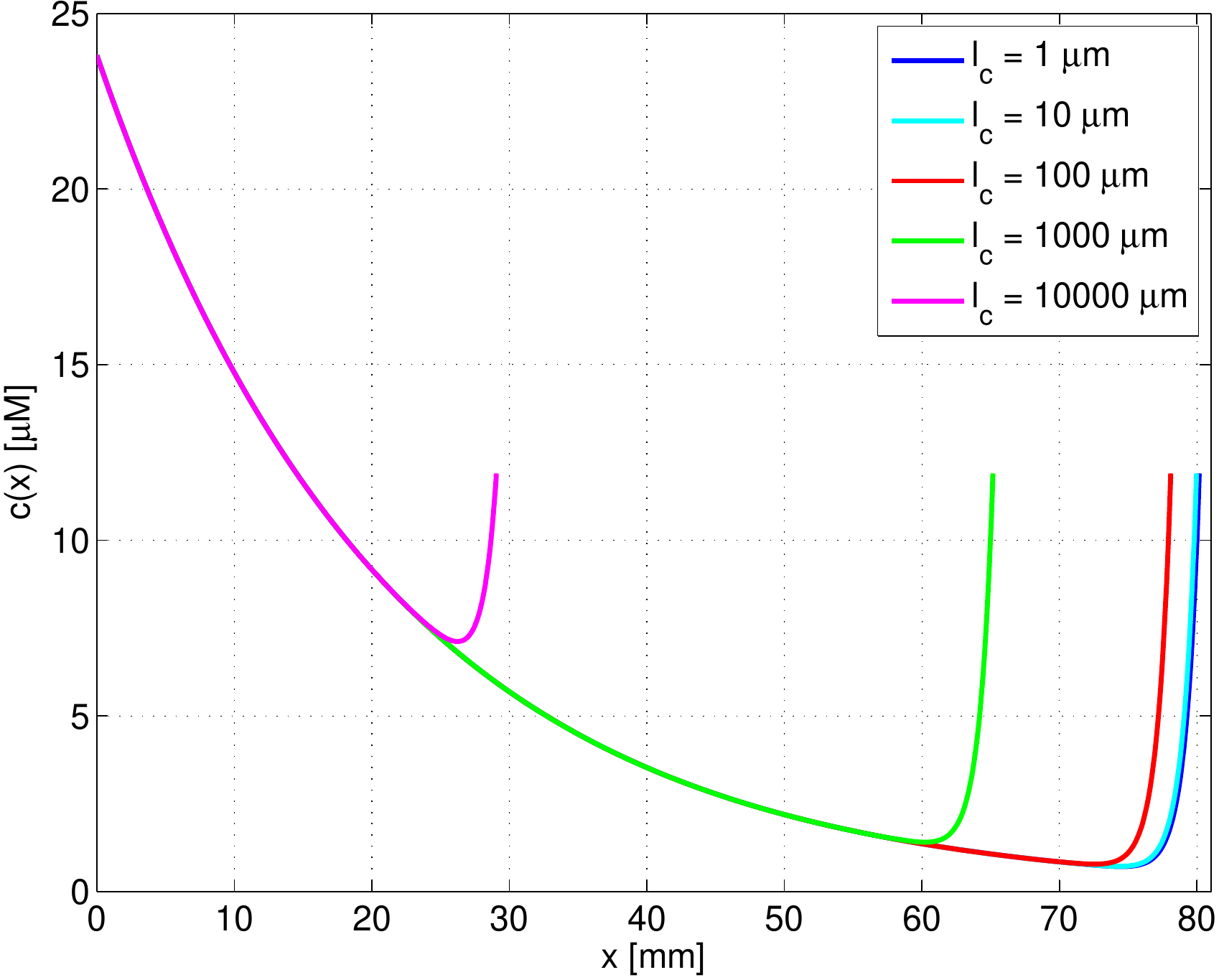}
  \caption{Concentration profiles in Case~I.iv for varying
  $g$ and $\lc$, respectively.}\label{fig:varyg}
\end{figure}%

From Figure~\ref{fig:varya} we can conclude that the steady-state length
$\linf$ increases with the active transport $a$ and both $\linf$ and the
concentration profile $c(x)$ are sensitive to small variations in $a$. On the
other hand, increasing the diffusion $D$ implies a decrease in $\linf$, but
hardly changes $c(x)$, except near the growth cone. Decreasing $D$ means a
substantial increase in $\linf$. Note that, according to
Section~\ref{sect:D=0}, in the limit $D=0$ there exists generally no continuous
steady-state solution. The physically relevant solution is given by
\eqref{eq:SSdisc}, which is a decreasing function all the way to the
steady-state length $\linf=184$~mm (for $\cs/\cinf=2$).

Given the nominal values of Table~\ref{table:1}, Figure~\ref{fig:varyg} shows
that increasing the degradation rate $g$ implies a substantial decrease in
$\linf$. However; the length $\linf$ is relatively insensitive to the size of
the growth cone expressed by the parameter $\lc$, unless this gets very big.

\section{On the stability of the steady states}\label{sect:stability}

It is of interest to know whether a steady-state solution of a mathematical
model is biologically and physically relevant, i.e., whether it can exist in
reality. A steady-state solution of a model is stable if one uses a disturbance
of the steady state as initial data and the dynamic solution converges back to
the steady state. If the solution moves further away, the steady state is
unstable.

Referring to the different cases of parameter values treated in the theorems of
Section~\ref{sect:SS}, we expect that when there exists a unique steady state,
it is stable. However, in Case~I.iii of Theorem~\ref{thm:case1}, in which
$f(z_0)=0.06<\cs/\cinf<1$, there exist two steady states and the question is
what happens for large times. Because of the presence of diffusion, which has a
damping effect on any oscillation, it is reasonable to assume that the dynamic
solution converges to a steady state as time increases.

Since it is difficult to make stability analyses mathematically, we make the
investigations here numerically. McLean and Graham~\cite{McLean2004} make a
spatial transformation so that the moving interval $(0,l(t))$ for the PDE is
transformed to the fixed interval $(0,1)$. A numerical implementation for their
problem is presented by Graham et al.~\cite{Graham2006}. The transformation
means that the same number of spatial computational cells is used along the
axon irrespective of its length. We will use the same spatial transformation:
\begin{align*}
&y:=\frac{x}{l(t)},\qquad \pp{y}{x}=\frac{1}{l(t)},\qquad
\pp{y}{t}=-\frac{xl'(t)}{l(t)^2}=-\frac{yl'(t)}{l(t)}.
\end{align*}
With $\cb(y,t):=c(yl(t),t)$, we get
\begin{align*}
&\pp{c}{t}=\pp{\cb}{t}-\frac{yl'(t)}{l(t)}\pp{\cb}{y},\qquad
\pp{c}{x}=\frac{1}{l(t)}\pp{\cb}{y},\qquad
\frac{\partial^2c}{\partial x^2}=\frac{1}{l(t)^2}\frac{\partial^2\cb}{\partial y^2}.
\end{align*}
Then the dynamic model \eqref{eq:model} is transformed to
\begin{equation}\label{eq:modely}
\left\{
\begin{aligned}
&\pp{\cb}{t}+\alpha(y,\cc,l)\pp{\cb}{y}
-\frac{D}{l^2}\frac{\partial^2\cb}{\partial y^2} = -g\cb,&\quad&0<y<1,\ t>0,\\
&\begin{split}
&\dd{\cc}{t}=\frac{(a-g\lc)}{\lc}\cc-\frac{D}{\lc l}\cb_y^-\\
&\qquad - \frac{(\rg\cc+\rgtilde\lc)}{\lc}(\cc-\cinf),
\end{split}&& t>0,\\
&\dd{l}{t}=\rg(\cc-\cinf),&&t>0,\\
&\cb(0,t) = \cs(t),&&t\ge 0,\\
&\cb(1,t)= \cc(t),&&t>0,\\
&\cb(y,0) = c_0(yl_0),&&0\le y\le 1,\\
&\cc(0)= c_0(l_0),&&\\
&l(0)=l_0,&&
\end{aligned}
\right.
\end{equation}
where
\begin{align*}
\alpha\big(y,\cc(t),l(t)\big):=\frac{a-yl'(t)}{l(t)}=\frac{a-y\rg\big(\cc(t)-\cinf\big)}{l(t)}.
\end{align*}
We will find approximate solutions to the model~\eqref{eq:modely} using the
method of lines by performing a spatial discretization of the PDE. The
$y$-interval $[0,1]$ is divided into $M$ subintervals all of the size $\Delta
y:=1/M$. Set $y_j:=j\Delta y$ and let $\Cb_j \approx \cb(y_j,t)$ for
$j=0,\ldots,M$, where $\Cb_M(t)=\cc(t)$ by the continuity boundary condition.
Spatial second-order difference approximations are used:
\begin{equation*}
\pp{\cb}{y}(y_j,\cdot) \approx \frac{\Cb_{j+1} - \Cb_{j-1}}{2\Delta y},
\quad \frac{\partial^2\cb}{\partial y^2}(y_j,\cdot) \approx \frac{\Cb_{j+1} - 2\Cb_{j} + \Cb_{j-1}}{(\Delta y)^2}.
\end{equation*}
In the ODE for the cone concentration, we use the one-sided second order
approximation
\begin{equation*}
\cb_y^- \approx \frac{\Cb_M - 4\Cb_{M-1} + 3\Cb_{M-2}}{2\Delta y} = \frac{\cc - 4\Cb_{M-1} + 3\Cb_{M-2}}{2\Delta y}.
\label{eq:FD_end_point}
\end{equation*}
Denote the time step by $\Delta t$ and set $t^n:=n\Delta t$, $n=0,1,\ldots$. At
time $t=t^n$, the concentration within the axon is approximated by the
numerically computed values $\Cb_j^n\approx \cb(y_j,t^n)$, $j=1,\ldots,M-1$.
The approximate growth-cone concentration is denoted by
$\Cc^n\approx\ccb(t^n)$, the axon length by $L^n\approx \lb(t^n)$. The explicit
Euler method means that each time derivative is approximated by formulas like
\begin{align*}
&\pp{\cb}{t}(y_j,t^n)\approx\frac{\Cb_j^{n+1} - \Cb_j^n}{\Delta t}.
\end{align*}
As initial values, we set
\begin{align*}
&\Cb_j^0 = c_0(y_jl_0), \quad j=1,\ldots,M, \quad \Cc^{0}=c_0(l_0) \quad\text{and}\quad L^0=l_0.
\end{align*}
Substituting the approximations of the derivatives into \eqref{eq:modely} we
get an explicit time marching numerical method, i.e., only old values of the
unknowns are used. For example, the update formulas for the concentration
values along the axon are, for $j=1,\ldots,M-1$:
\begin{align}
\begin{split}
\Cb_j^{n+1}&=
\Cb_j^{n}+\Delta t\left(-\alpha(y_i,\Cc^n,L^n)\frac{\Cb^n_{j+1} - \Cb^n_{j-1}}{2\Delta y}\right.\\
&\qquad\qquad+\left.\frac{D}{(L^n)^2}\frac{\Cb_{j+1}^n - 2\Cb_{j}^n + \Cb_{j-1}^n}{(\Delta y)^2}-g\Cb_j^n\right)
\end{split}\notag\\
\begin{split}
&=\left(1-\Delta t\left(\frac{2D}{(L^n\Delta y)^2}-g\right)\right)\Cb_j^{n}\\
&\qquad+\Delta t\left(-\frac{\alpha(y_i,\Cc^n,L^n)}{2\Delta y}+\frac{D}{(L^n\Delta y)^2}\right)\Cb^n_{j+1}\\
&\qquad\qquad+\Delta t\left(\frac{\alpha(y_i,\Cc^n,L^n)}{2\Delta y}+\frac{D}{(L^n\Delta y)^2}\right)\Cb^n_{j-1}.
\end{split}\label{eq:PDEupdate}
\end{align}

Given $\Delta y=1/M$, the time step $\Delta t$ has to be chosen sufficiently
small to avoid instabilities in the updates \eqref{eq:PDEupdate} for the
advection-diffusion PDE of \eqref{eq:modely}. For example, with constant
Dirichlet boundary conditions, there exists a standard so called CFL condition
to ensure stability. The complication here is the coupling to the two ODEs via
the boundary at $x=l(t)$. Therefore, we make some assumptions on the numerical
updates. To be more precise, let $N$ be the number time steps, $T:=N\Delta t$
the total simulation time and set
\begin{align*}
\cmax&:=\max\big(\cinf,\max\limits_{0\le t\le T}\cs(t)\big).
\end{align*}
Assume that the numerical values satisfy
\begin{align}\label{eq:Lbounded}
&L^n \ge\lmin>0\quad\text{and}\quad0\le\Cc^n\le\cmax\quad\text{for $n=0,1,\ldots,N$},
\end{align}
where $\lmin$ is a constant. The scheme \eqref{eq:PDEupdate} is monotone (or
positive) if the coefficients for $\Cb^n_{j-1}$, $\Cb^n_{j}$ and $\Cb^n_{j+1}$
are non-negative. Then no unphysical numerical oscillations appear within the
axon. The requirements on the discretization parameters are the cell P\'{e}clet
condition for $\Delta y$ and a CFL condition for $\Delta t$:
\begin{align}
\Delta y&\le\frac{2D}{(a+\rg2\cmax)\lmin},\label{eq:stability}\\
\Delta t&\le\left(g+\frac{2D}{(\lmin\Delta y)^2}\right)^{-1}.\label{eq:CFL}
\end{align}

We first demonstrate Case~I.iv of Theorem~\ref{thm:case1}, which is the case
when $\cs/\cinf>1$ and there exists a unique steady-state solution. To be able
to compare with Figure~\ref{fig:ivc}, we choose the constant ratio
$\cs/\cinf=2$ and the initial axon length $l(0)=2$~mm. With the spatial
discretization $M=1/\Delta y=1000$, the stability criterion
\eqref{eq:stability} is satisfied (with $\lmin=2$~mm) and the CFL condition
\eqref{eq:CFL} gives $\Delta t\le 5.0$~s. The axon length as function of time
is shown in Figure~\ref{fig:stabkvot2}.
\begin{figure}[tbh]
  \centering\includegraphics[width=0.4\textwidth]{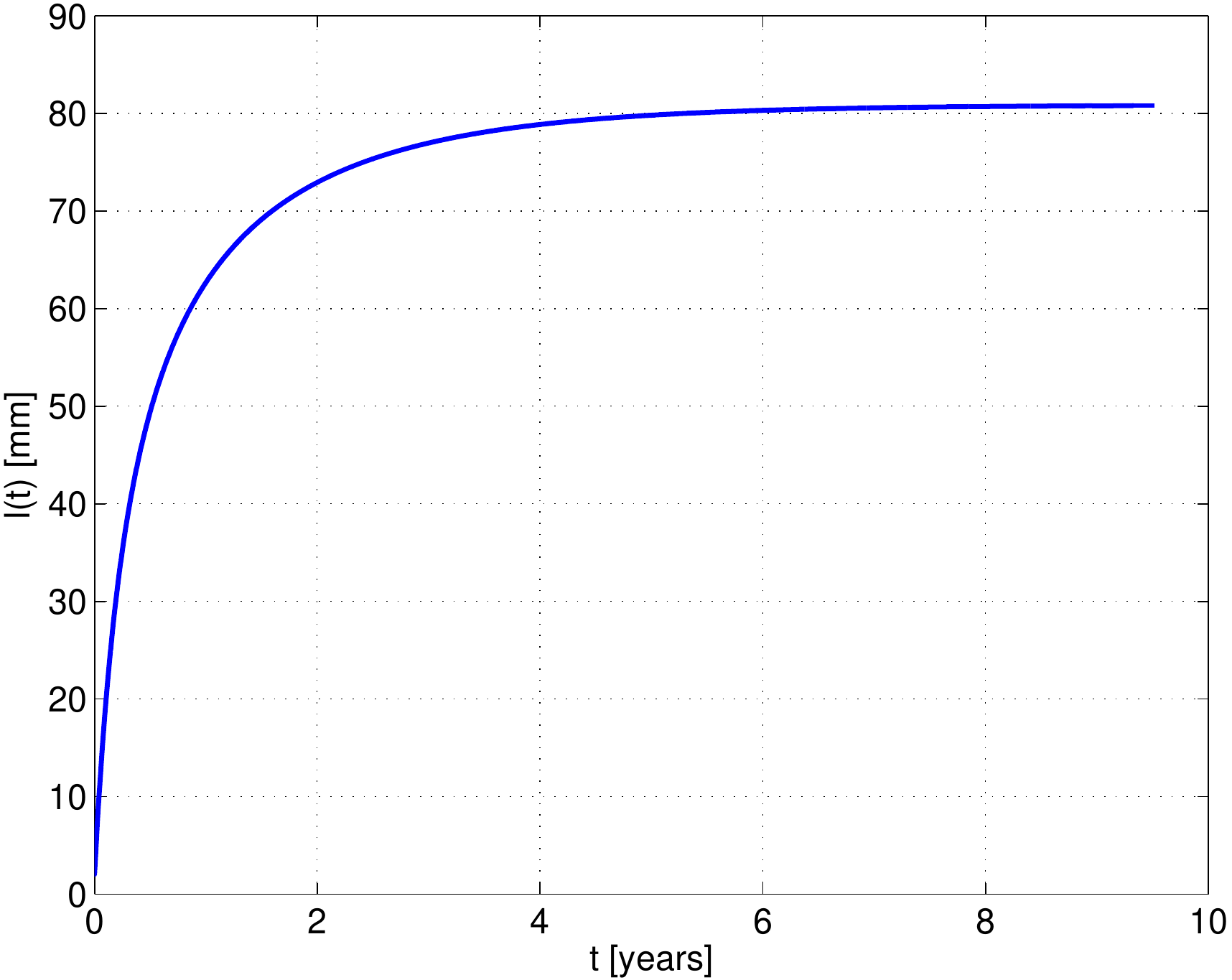}
  \caption{A dynamic simulation of the outgrowth of an axon when $\cs/\cinf=2$.
  According to the steady-state results in Section~\ref{sect:SS},
  the final length is $\linf=80.1$~mm and the final concentration
  distribution is the same as the theoretically obtained one
  in Figure~\ref{fig:ivc}.}\label{fig:stabkvot2}
\end{figure}%

In the following simulations, we have chosen $M=300$ and $\Delta t$ in
accordance with \eqref{eq:stability}--\eqref{eq:CFL}. In the next two
simulations, we choose $\cs/\cinf=0.1$, which corresponds to Case~I.iii of
Theorem~\ref{thm:case1} and means that there are two steady-state solutions;
see the red and green dots in Figure~\ref{fig:f} and the corresponding
steady-state concentrations in Figure~\ref{fig:iii}. The simulation shown in
Figure~\ref{fig:unstabup} demonstrates simultaneously the instability of the
steady state with the shorter length $\linf^{\mathrm{short}}=2.82$~mm and the
stability of the longer one with $\linf^{\mathrm{long}}=17.3$~mm by starting
very close to the smaller steady state: $l(0)=\linf^{\mathrm{short}}+1~\mu$m
and with an initial profile $c(x,0)$ very close to $c_1(x)$ shown in
Figure~\ref{fig:iii}.
\begin{figure}[tbh]
  \centering\includegraphics[width=0.4\textwidth]{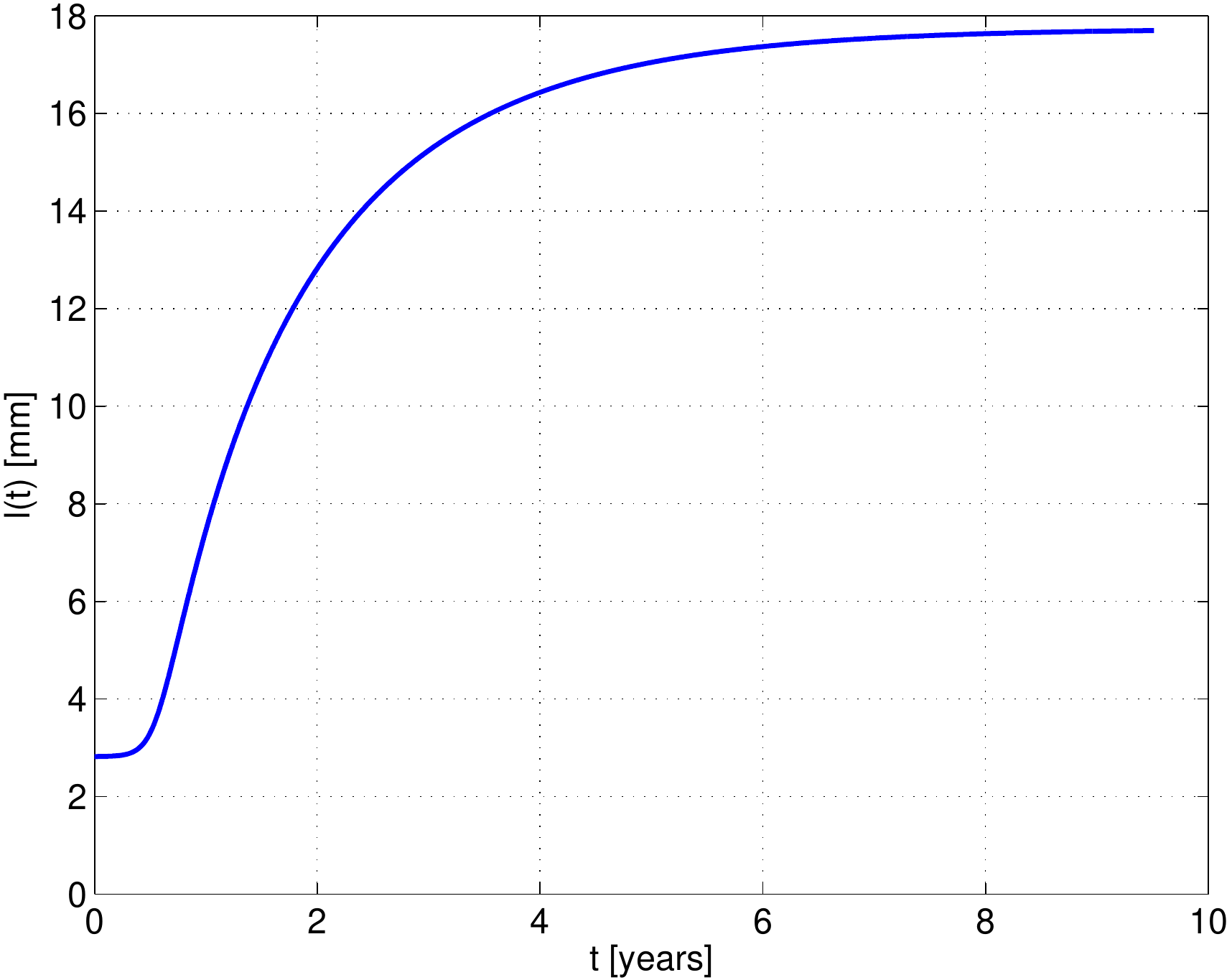}
  \caption{The case $\cs/\cinf=0.1$. Dynamic behaviour of an axon with initial length
  slightly larger than the steady-state value $\linf^{\mathrm{short}}=2.82$~mm.
   The initial data is almost in steady state, but since this
  is unstable, convergence to the other stable steady state occurs.
  }\label{fig:unstabup}
\end{figure}%

If the initial data is instead a small perturbation to the slightly shorter
length  $l(0)=\linf^{\mathrm{short}}-1~\mu$m, we get the simulation result in
Figure~\ref{fig:unstabdown}.
\begin{figure}[tbh]
  \centering\includegraphics[width=0.4\textwidth]{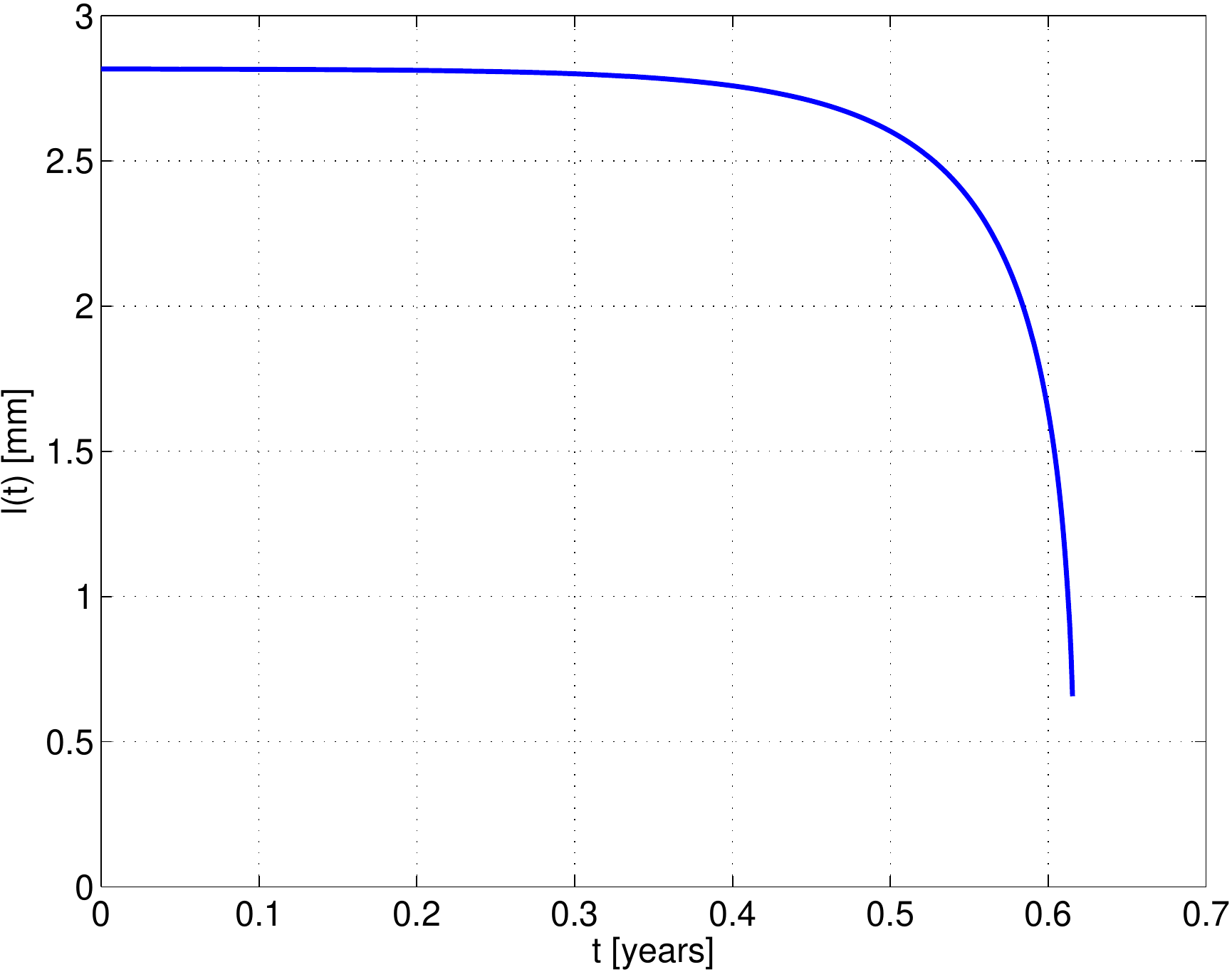}
  \caption{The case $\cs/\cinf=0.1$. The dynamic behaviour of an axon with initial length
  slightly smaller than the steady-state value $\linf^{\mathrm{short}}=2.82$~mm.
 The initial data is almost in steady state, but since this
  is unstable the axon shrinks.
  }\label{fig:unstabdown}
\end{figure}%
Note that our model \eqref{eq:model} is only valid for $l(t)>0$. In particular,
zero length of an axon is not a steady state. This is in agreement with the
simulation in Figure~\ref{fig:unstabdown}, in which $l'(t)$ is far from zero as
$l(t)\rightarrow 0$.

With the kept ratio $\cs/\cinf=0.1$, Figure~\ref{fig:stabdown} shows a
simulation when the initial length is $l(0)=30$~mm, which is greater than the
stable steady-state length $\linf^{\mathrm{long}}=17.3$~mm. There is a
convergence back to the stable length 17.3~mm.
\begin{figure}[tbh]
  \centering\includegraphics[width=0.4\textwidth]{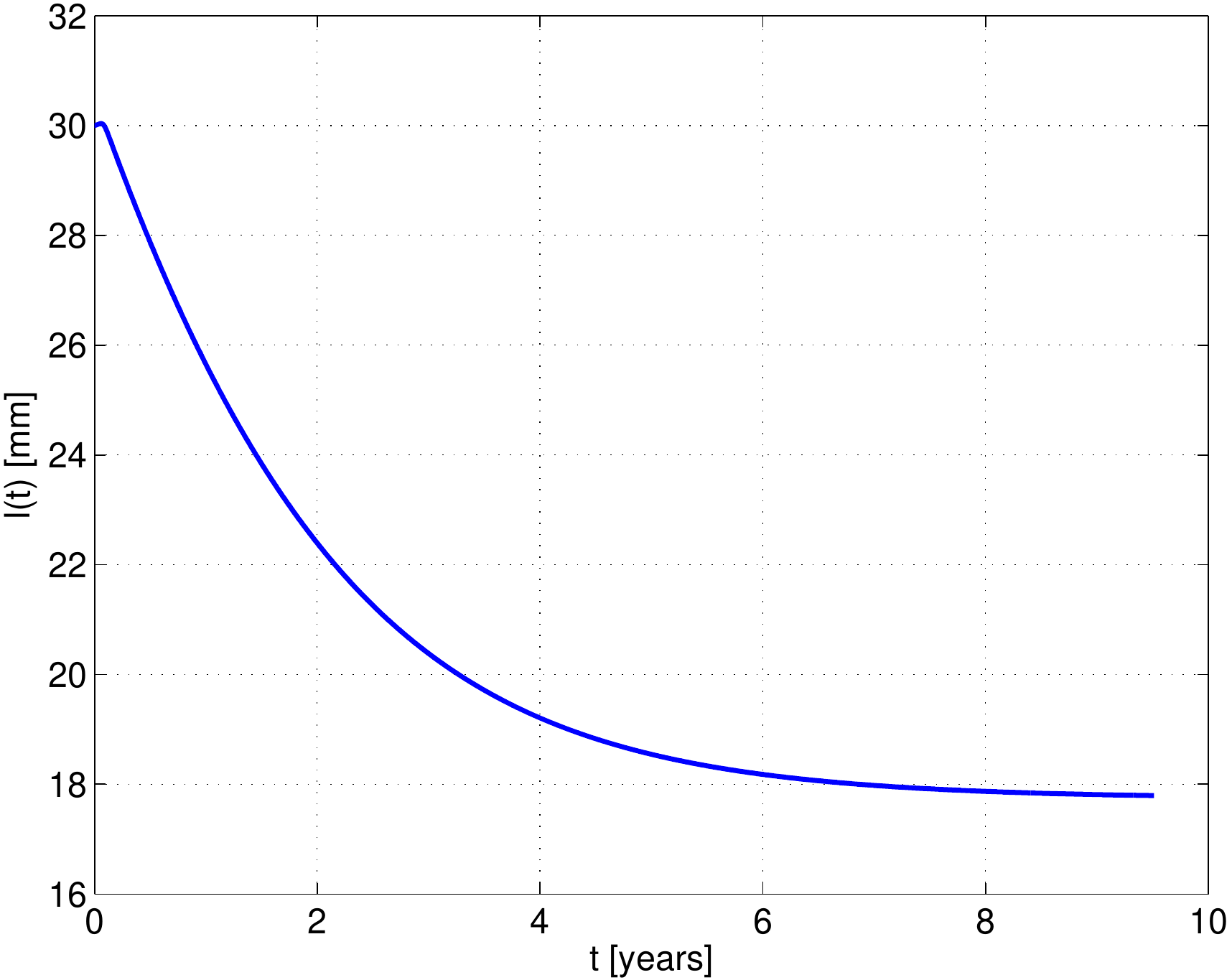}
  \caption{The case $\cs/\cinf=0.1$. Starting from the initial length of $l(0)=30$~mm
  there is a convergence to the steady-state length $\linf^{\mathrm{long}}=17.3$~mm.
  }\label{fig:stabdown}
\end{figure}%

Finally, we show a simulation when the initial length is $l(0)=30$~mm, but the
ratio has been lowered to $\cs/\cinf=0.05$, which is below the threshold value
$f(z_0)=0.060$. According to Theorem~\ref{thm:case1} there exists no steady
state, which is in accordance with the simulation in Figure~\ref{fig:noSS},
which shows how the axon shrinks.
\begin{figure}[tbh]
  \centering\includegraphics[width=0.4\textwidth]{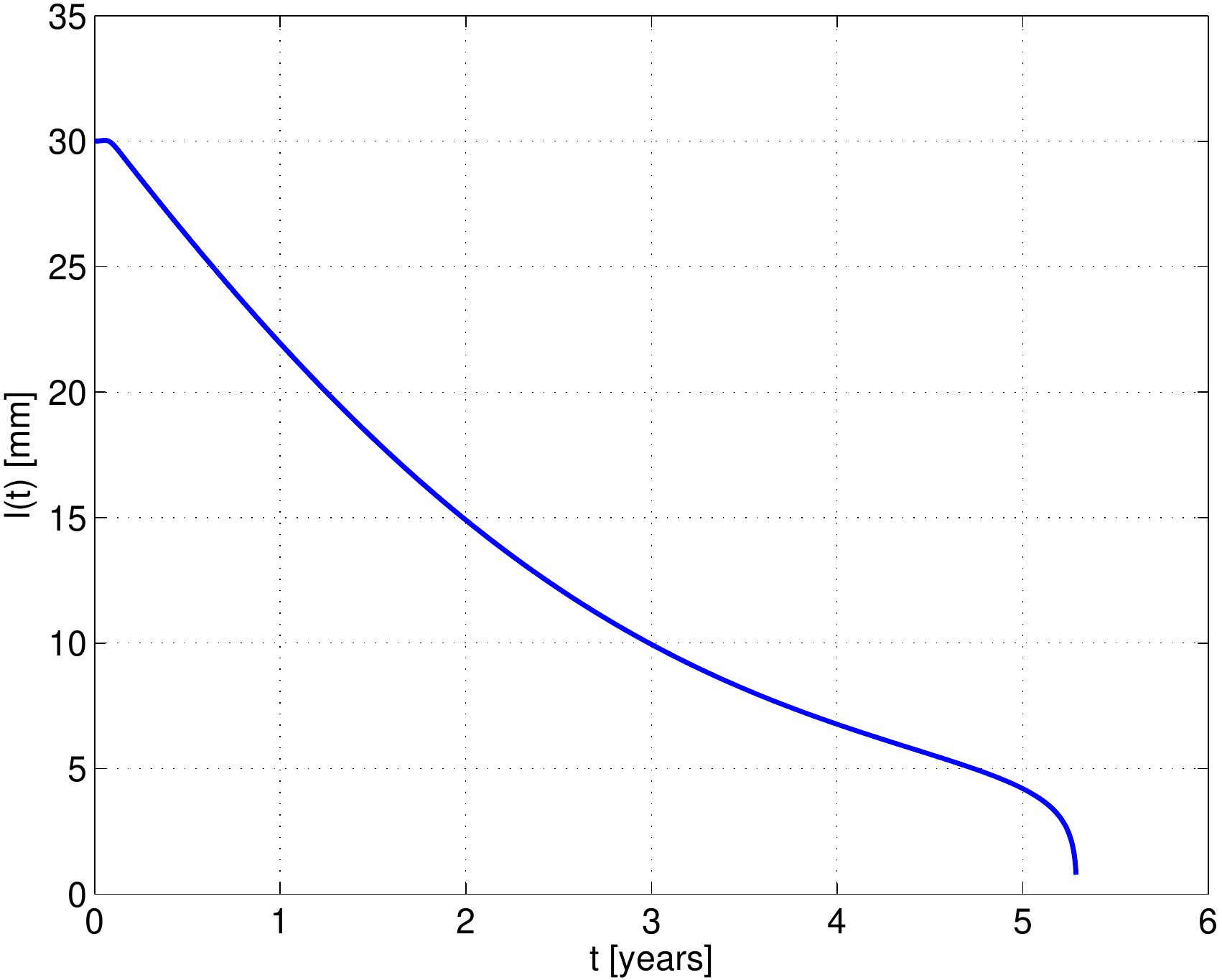}
  \caption{The case $\cs/\cinf=0.05$ when no steady state exists.
  Starting from the initial length of $l(0)=30$~mm
  the axons shrinks.
  }\label{fig:noSS}
\end{figure}%

\section{Discussion and conclusions}\label{sect:discussion}

\subsection{Mathematical modelling aspects}

The dynamic model presented for axonal growth \eqref{eq:model} can be seen as
an extension and modification of a previously published model by McLean and
Graham~\cite{McLean2004} (MG~model). Both models use the linear PDE
\eqref{eq:PDE1} for modelling the active and diffusive components of transport
of tubulin along the axon and an ODE for the axonal elongation
\eqref{eq:dldt2}, namely that the length increase per time unit is equal to an
affine relationship of the available concentration of free tubulin in the
growth cone (or at the boundary $x=l(t)$).

The two new ingredients in our model is (1) an additional ODE for the
concentration of free tubulin in the growth cone; and (2) a careful derivation
of the flux of free tubulin into the growth --- this flux includes the velocity
of the moving boundary. We allow the growth cone to have a certain size,
whereas it is zero in the MG~model. When the size of the growth cone in our
model tends to zero, the MG~model is, however, not obtained. The reason lies in
the ingredient (2). We explain this in detail below.

The MG~model has an additional ODE for the production of tubulin in the soma.
This means that the production rate and size of the soma must be known to drive
the model. This is an interesting modelling aspect that we have purposely left
out, since the boundary at $x=0$ is stationary and implies no special
difficulty in the modelling. We have focused on the complexity of the moving
boundary at $x=l(t)$ and have chosen the simpler assumption that the soma
concentration is directly given as a function of time; $\cs=\cs(t)$.

In the modelling step, all the following phenomena or processes have been taken
into account: active transport, diffusion, degradation along the axon and in
the growth cone, assembly and disassembly of tubulin dimers in the growth cone
and finally the movement of the growth cone, which is also the output of the
model. The problematic fact is that all these effects are combined and
influence each other, since they depend on the local concentration. With such
many combined processes, it is impossible to tell beforehand (by physiological
and biological experience) what a natural steady-state concentration
distribution $c(x)$ along the axon is. It is then the strength of mathematical
modelling comes in. Each phenomenon can be included individually, based on a
restricted physiological or biological explanation possibly obtained from
controlled experimental conditions; see Section~\ref{sect:parameters} and all
references therein. The outcome of the mathematical model is then the
combination of all phenomena such that they obey the overall physical law of
conservation of mass. A typical such example is the non-monotone concentration
distribution of free tubulin $c(x)$ along the axon, which our model yields for
nominal parameter values. Having a correct intuitive feeling for the final
outcome would be equal to solving the differential equations by intuition. On
the other hand, once we have the mathematical result (the non-monotone
profile), one should try to give some intuitive explanations; see
Section~\ref{sect:conclusions}.

It is interesting to track down the exact details of the differences between
the present model and the MG~model. We use the same PDE~\eqref{eq:PDE1} along
the axon, which comes from the conservation of mass and the fact that the flux
[mol/s] of tubulin at any fixed point on the $x$-axis is $ac^--Dc_x^-$. We also
use the same ODE for the axonal elongation \eqref{eq:dldt2}. The second ODE in
our model describes the accumulation of free tubulin in the growth cone and
models thereby the dynamic variation of the cone concentration $\cc(t)$. In
this way, we get a natural connection between the transport processes along the
axon and the growth cone concentration.

The MG~model has a growth cone of size zero, which is a mathematical
idealization that can be made. However, the main difference lies in the
boundary condition at $x=l(t)$. We have carefully derived that the flux of
tubulin over $x=l(t)$ (seen by an observer moving with the boundary) is given
by the expression \eqref{eq:fluxoverl}, i.e., the flux per area unit
[mol/(m$^2$s)] is $ac^--Dc_x^--l'(t)c^-$. McLean and Graham~\cite{McLean2004}
writes that this flux is $-c_x^-$. Hence, they assume that diffusion is the
only effective flux (with $D=1$) and that $a=l'(t)$. (The latter equality
implies $l(t)=at+l_0$, which contradict the ODE for $l(t)$.) In any case, their
final boundary condition \cite[Equation~(2.3)]{McLean2004} for $x=l(t)$ is
\begin{align}\label{eq:McLean}
&c_x^-=-\epsilon_lc^-+\zeta_l.
\end{align}
This introduction of new parameters remove the above implicit assumption that
$D=1$. However, \eqref{eq:McLean} means that the moving boundary has not been
taken into account. To compare with our model, Equation~\eqref{eq:McLean} can
be seen as a simplification of our ODE for the growth cone concentration, which
is either \eqref{eq:cons_law_cone}, \eqref{eq:dccdt1} or the second equation of
\eqref{eq:model}. Namely, setting the size parameter $\lc$ of the growth cone
to zero and using $\cc=c^-$, \eqref{eq:dccdt1} becomes
\begin{gather}
0=ac^--Dc_x^--\rg(c^--\cinf)c^-\quad\Longleftrightarrow\quad\notag\\
c_x^-=-\frac{\rg}{D}(c^-)^2+\frac{a+\rg\cinf}{D}c^-.\label{eq:temp1}
\end{gather}
An equivalent way of obtaining this equation is to simply use the conservation
of mass over $x=l(t)$. Seen from an observer moving with the boundary (to the
right along the $x$-axis), the influx on the left-hand side of $x=l(t)$ is
$ac^--Dc_x^--l'(t)c^-$, which is equal to the flux out on the right-hand side,
which is zero. Substituting the ODE for $l'(t)$ one arrives at
\eqref{eq:temp1}.

The differences between the two models (when the growth cone has zero size) can
now be seen in the differences between \eqref{eq:McLean} and \eqref{eq:temp1}.
Firstly, in \eqref{eq:temp1} no new parameter has been introduced, in contrast
to $\epsilon_l$ and $\zeta_l$ \eqref{eq:McLean}. Secondly, taking the moving
boundary into account, one ends up in a nonlinear relation \eqref{eq:temp1}
between the concentration $c^-$ and its spatial derivative $c_x^-$. This
difference explains why the MG~model yields decreasing concentration profiles
along the axon, whereas we get non-monotone ones.

Note that we get a non-monotone profile also in the case when $\lc=0$;
Theorem~\ref{thm:case1} still holds; see also Figure~\ref{fig:varyg}. Then
there is no ODE for the growth cone, but instead the boundary condition
\eqref{eq:temp1} at $x=l(t)$ for the PDE. Hence, it is not the growth cone ODE
that is responsible for the non-monotone steady-state profile, it is simply the
outcome of the conservation law of mass.

\subsection{Conclusions}\label{sect:conclusions}

One outcome of this work is the presented new dynamic model of tubulin-driven
axonal growth \eqref{eq:model}. The input to the model is the tubulin
concentration $\cs(t)$, which we assume is given. Except for this function, the
model has seven (constant) parameters, which are all biological or
physiological parameters or combinations of such. Since we analyze steady-state
solutions in this work, $\cs$ has been constant in time. Depending on this
value, and the values of five additional parameters ($a$, $D$, $g$, $\lc$,
$\cinf$), all steady-state solutions have been classified. One conclusion is
that the values of the rate constants $\rg$ and $\rgtilde$ in the dynamic model
have no influence on the steady states, hence only on the dynamic behaviour and
convergence to steady state.

The biologically most interesting cases arise when all parameters take positive
values and these cases are given in Theorem~\ref{thm:case1}. With the nominal
parameter values (Table~\ref{table:1}), which we have extracted directly or
indirectly from the biological literature, the following inequality is
fulfilled:
$$
g\lc=2\cdot 10^{-12}\text{~m$/$s}~<1\cdot
10^{-8}\text{~m$/$s~}=a.
$$
Then Case~I of Theorem~\ref{thm:case1} states that, when a steady state exists,
the concentration of tubulin along the axon $c(x)$ is given by the explicit
formula \eqref{eq:SSsolution}, but the length $\linf$ has to be obtained
numerically by solving the equation $f(z)=\cs/\cinf$ with $f(z)$ given by
\eqref{eq:f}. A way of getting an overview of the possible values of $\linf$ is
to plot the graph of the function \eqref{eq:f}, which has a minimum at
$z_0=5.64$~mm; see Figure~\ref{fig:f}. The value on the ratio $\cs/\cinf$ gives
directly whether zero, one or two steady states exist and the corresponding
length(s) $\linf$ can be read off on the $z$-axis. If the ratio
$\cs/\cinf<f(z_0)=0.06$, then there exists no steady state, i.e.\ no outgrowth
can occur. For higher values on $\cs/\cinf$, the steady-state concentration
along the axon is given by the function \eqref{eq:SSsolution}. If
$0.06<\cs/\cinf<1$, then there exists two steady-state solutions. One has a
very short $\linf<z_0=5.64$~mm, which actually becomes smaller and smaller the
closer the ratio $\cs/\cinf$ is to 1. Dynamic numerical simulations indicate
that this is an unstable steady state and thus not possible biologically. The
other steady state is stable, has a larger $\linf>z_0=5.64$~mm, and this value
is larger the larger $\cs/\cinf$ is. When $\cs/\cinf\ge 1$, there exists a
unique steady-state solution.

Numerical simulations of the dynamic behaviour in Section~\ref{sect:stability}
indicate that every steady-state solution with $\linf>z_0=5.64$~mm is stable,
whereas the shorter ones are unstable. An overall conclusions is thus that for
the most interesting cases when $g\lc<a$ and $\cs/\cinf>f(z_0)$ hold, there
exists a unique stable steady-state solution. For an axon of arbitrary length,
if the soma concentration decreases to a constant value such that
$\cs/\cinf<f(z_0)$, then there exists no steady-state solution and dynamic
numerical simulations show a shrinking axon.

One interesting outcome from the model is the form of the concentration
distribution $c(x)$ along the axon for the stable steady states in Case~I of
Theorem~\ref{thm:case1}. Starting from the soma at $x=0$, the concentration
decreases and is in fact along a large portion of the axon lower than both
$\cs$ and the steady-state soma concentration $\cinf$. The explanation for this
form is hard to make intuitively, since it is a result of the combined effects
of active transport, diffusion, degradation, (dis)assembly of tubulin and the
velocity of the growth cone. The mathematical equations combine these phenomena
and yield the very form of $c(x)$, which is given explicitly by the closed-form
expression \eqref{eq:SSsolution}. This is a good example of the purpose and
strength of mathematical modelling; it is impossible to reason biologically
what the precise form of a steady-state solution should be.

An interesting biological conclusion, which is in agreement with experimental
results, is that a relatively large active transport velocity $a$ means that
the \emph{flux} is sufficiently large to transport tubulin the long way out to
the growth cone despite the relatively low \emph{concentration} of tubulin in
the axon. Along the axon, the two flux components, advective and diffusive, are
precisely balanced by the degradation of tubulin. Note that the total flux
(advective plus diffusive) is decreasing along the axon because of the
degradation. The increasing concentration distribution near the growth cone
implies that diffusion occurs in the direction to the soma but the active
transport to the growth cone is so large that the net flux is precisely what is
needed to balance the degradation of tubulin in the growth cone. Note also that
we have assumed that the concentration of tubulin varies continuously due to
the presence of diffusion.

If the active transport velocity $a$ is too low, namely less than $g\lc$
(Case~II of Theorem~\ref{thm:case1}), then there still exists a unique
solution; however, the concentration along the axon $c(x)$ is decreasing, which
partly means that the diffusion flux is always directed towards to growth cone,
and partly that this solution only exists if the soma concentration
$\cs>\cinf$.

The model gives information also in the extreme cases when one or more
parameters are zero. The biological interest in such cases is if the
corresponding variable(s) are negligible. The mathematical advantage to set a
very small variable to zero, is that in some cases one gets special solutions
that can be written up explicitly which makes it easier to draw biological
conclusions.

When advection is negligible ($a=0$), the flux of tubulin from the soma to the
growth cone is only present in the form of diffusion. Since diffusive flux
occurs from higher to lower concentrations, the only possible steady state has
a decreasing concentration of tubulin from the soma to the growth cone, which
is precisely what Theorem~\ref{thm:case2} states. This theorem also states that
if the soma concentration is too small ($\cs<\cinf$), there exists no steady
state with $\linf>0$.

Another extreme case is when the degradation of tubulin is negligible ($g=0$).
Theorem~\ref{thm:case3} states that a steady state exists if and only if
$\cs<\cinf$ holds. Then there is no net flux at any $x$ along the axon; i.e.\
the concentration $c(x)$ is increasing in such away that the advective flux
towards the growth cone is equal to the diffusive flux back to the soma. If
$\cs\ge\cinf$, then there exists no steady state and the the axon may grow
indefinitely.

The case when diffusion is negligible ($D=0$) is a special case; however, not
an unrealistic case, since $D$ is a relatively small number. Without diffusion,
the solution of the PDE may in fact contain discontinuities. Hence, the
continuity assumption we have made for the boundary conditions is not natural.
From a mathematical point of view, one cannot beforehand ignore the possibility
of a concentration discontinuity at $x=l(t)$. Theorem~\ref{thm:SSdisc} gives
explicit functions for both the decreasing concentration profile along the axon
and the length of the axon in the unique steady state. Note that the decreasing
concentration along the axon agrees with the general case except near the
growth cone. We also note that the steady state exists if and only if
$a\cs>g\lc\cinf$, i.e., when the advective flux ($a\cs$) from the soma is
greater than the degradation of tubulin in the growth cone ($g\lc\cinf$). The
difference between these two numbers is precisely the amount per time unit of
transported tubulin along the axon that is degraded.

We have not found (in literature) any experimental indications against the
non-monotone concentration distribution of tubulin along the axon which our
model yields for the nominal parameter values. It is in agreement with the fact
that the active transport is the most important ingredient for the outgrowth of
long axons. The velocity $a$ of the active transport has been assumed to be
constant. Experiments reported by Watson et al.~\cite{Watson1989} and Xu and
Tung~\cite{Xu2001} show a decreasing velocity along fully grown axons; i.e.
$a=a(x)$ is a decreasing function. Although such a dependence could be included
in the model, it is not clear what causes this decrease and hence what function
$a(x)$ to use.

More comprehensive future extensions of the model would be to include other
substances than tubulin, for example, actin which is redistributed within the
growth cone so that it can turn as a response to external stimuli.

\section*{Acknowledgement}

This article is dedicated to Professor Martin Kanje, Department of Biology,
Lund University, who initiated this work, but unfortunately deceased on March
21, 2013, before he was able to see and interact with the results. Erik
Henningsson was supported by the Swedish Research Council Grant no.\
621-2011-5588. We would like to thank the anonymous reviewers for critical
questions, which lead to a substantially improved article.





\bibliographystyle{plain}

\begin{thebibliography}{}

\end{thebibliography}


\begin{thebibliography}{10}

\bibitem{Caplow2002} M.~Caplow and L.~Fee.
\newblock Dissociation of the tubulin dimer is extremely slow,
  thermodynamically very unfavorable, and reversible in the absence of an
  energy source.
\newblock {\em Mol. Biol. Cell}, 13(6):2120--2131, 2002.

\bibitem{Dent2003} E.~W. Dent and F.~B. Gertler.
\newblock Cytoskeletal dynamics and transport in growth cone motility and axon
  guidance.
\newblock {\em Neuron}, 40(2):209--227, 2003.

\bibitem{Galbraith2000} J.~A. Galbraith and P.~E. Gallant.
\newblock Axonal transport of tubulin and actin.
\newblock {\em J. Neurocytology}, 29(11--12):889--911, 2000.

\bibitem{Galbraith1999} J.~A. Galbraith, T.~S. Reese, M.~L. Schlief, and P.~E.
    Gallant.
\newblock Slow transport of unpolymerized tubulin and polymerized neurofilament
  in the squid giant axon.
\newblock {\em Proceedings of the National Academy of Sciences of the United
  States of America}, 96(20):11589--11594, 1999.

\bibitem{Garcia2012} J.~A. Garc\'{i}a, J.~M. Pe\~{n}a, S.~McHugh, and
    A.~J\'{e}rusalem.
\newblock A model of the spatially dependent mechanical properties of the axon
  during its growth.
\newblock {\em CMES -- Computer Modeling Eng.\ Sci.}, 87(5):411--432, 2012.

\bibitem{Graham2006} B.~P. Graham, K.~Lauchlan, and D.~R. McLean.
\newblock Dynamics of outgrowth in a continuum model of neurite elongation.
\newblock {\em J.\ Comput.\ Neuroscience}, 20(1):43--60, 2006.

\bibitem{Graham2006BMC} B.~P. Graham and A.~{van Ooyen}.
\newblock Mathematical modelling and numerical simulation of the morphological
  development of neurons.
\newblock {\em BMC Neuroscience}, 7(SUPPL. 1), 2006.

\bibitem{Hjort2014} J.~J.~Johannes Hjorth, Jaap van Pelt, Huibert~D.
    Mansvelder, and Arjen van
  Ooyen.
\newblock Competitive dynamics during resource-driven neurite outgrowth.
\newblock {\em PLoS ONE}, 9(2):e86741, 02 2014.

\bibitem{Janul2006} A.~Janulevicius, J.~{van Pelt}, and A.~{van Ooyen}.
\newblock Compartment volume influences microtubule dynamic instability: A
  model study.
\newblock {\em Biophys. J.}, 90(3):788--798, 2006.

\bibitem{Keith1987} C.~H. Keith.
\newblock Slow transport of tubulin in the neurites of differentiated {PC12}
  cells.
\newblock {\em Science}, 235(4786):337--339, 1987.

\bibitem{Kiddie2005} G.~Kiddie, D.~McLean, A.~Van Ooyen, and B.~Graham.
\newblock Biologically plausible models of neurite outgrowth.
\newblock In C.~N. Levelt A. van Ooyen G.~J.~A.~Ramakers J.~van Pelt,
  M.~Kamermans and P.~R. Roelfsema, editors, {\em Development, Dynamics and
  Pathiology of Neuronal Networks: from Molecules to Functional Circuits},
  volume 147 of {\em Progress in Brain Research}, pages 67--80. Elsevier, 2005.

\bibitem{McLean2004} D.~R. McLean and B.~P. Graham.
\newblock Mathematical formulation and analysis of a continuum model for
  tubulin-driven neurite elongation.
\newblock {\em Proceedings Royal Society A: Mathematical, Physical and
  Engineering Sciences}, 460(2048):2437--2456, 2004.

\bibitem{McLean2006} D.~R. McLean and B.~P. Graham.
\newblock Stability in a mathematical model of neurite elongation.
\newblock {\em Math. Med. Biol.}, 23(2):101--117, 2006.

\bibitem{McLean2004num} D.~R. McLean, A.~{van Ooyen}, and B.~P. Graham.
\newblock Continuum model for tubulin-driven neurite elongation.
\newblock {\em Neurocomputing}, 58--60:511--516, 2004.

\bibitem{Miller2008} K.~E. Miller and S.~R. Heidemann.
\newblock What is slow axonal transport?
\newblock {\em Exp. Cell Res.}, 314(10):1981--1990, 2008.

\bibitem{Miller1997} K.~E. Miller and D.~C. Samuels.
\newblock The axon as a metabolic compartment: Protein degradation, transport,
  and maximum length of an axon.
\newblock {\em J. Theor. Biol.}, 186(3):373--379, 1997.

\bibitem{Mitchison1984a} T.~Mitchison and M.~Kirschner.
\newblock Dynamic instability of microtubule growth.
\newblock {\em Nature}, 312(5991):237--242, 1984.

\bibitem{OToole2011} M.~O'Toole and K.~E. Miller.
\newblock The role of stretching in slow axonal transport.
\newblock {\em Biophys. J.}, 100(2):351--360, 2011.

\bibitem{Pepperkok1990} R.~Pepperkok, M.~H. Bre, J.~Davoust, and T.~E. Kreis.
\newblock Microtubules are stabilized in confluent epithelial cells but not in
  fibroblasts.
\newblock {\em J. Cell Biol.}, 111(6 II):3003--3012, 1990.

\bibitem{Sadegh2010} K.~Sadegh~Zadeh and S.~B. Shah.
\newblock Mathematical modeling and parameter estimation of axonal cargo
  transport.
\newblock {\em J. Comp. Neurosci.}, 28(3):495--507, 2010.

\bibitem{Salmon1984} E.~D. Salmon, W.~M. Saxton, R.~J. Leslie, M.~L. Karow, and
    J.~R. McIntosh.
\newblock Diffusion coefficient of fluorescein-labeled tubulin in the cytoplasm
  of embryonic cells of a sea urchin: Video image analysis of fluorescence
  redistribution after photobleaching.
\newblock {\em J. Cell Biol.}, 99(6):2157--2164, 1984.

\bibitem{Smith2001} D.~A. Smith and R.~M. Simmons.
\newblock Models of motor-assisted transport of intracellular particles.
\newblock {\em Biophys. J.}, 80(1):45--68, 2001.

\bibitem{Suter2011} D.~M. Suter and K.~E. Miller.
\newblock The emerging role of forces in axonal elongation.
\newblock {\em Progress Neurobiol.}, 94(2):91--101, 2011.

\bibitem{Tanaka1995} E.~Tanaka and J.~Sabry.
\newblock Making the connection: Cytoskeletal rearrangements during growth cone
  guidance.
\newblock {\em Cell}, 83(2):171--176, 1995.

\bibitem{vanOoyen2011} A.~{van Ooyen}.
\newblock Using theoretical models to analyse neural development.
\newblock {\em Nature Reviews Neuroscience}, 12(6):311--326, 2011.

\bibitem{vanOoyen2001} A.~van Ooyen, B.~P. Graham, and G.~J.~A. Ramakers.
\newblock Competition for tubulin between growing neurites during development.
\newblock {\em Neurocomputing}, 38--40(0):73--78, 2001.

\bibitem{VanVeen1994} M.~P. Van~Veen and J.~Van~Pelt.
\newblock Neuritic growth rate described by modeling microtubule dynamics.
\newblock {\em Bull. Math. Biology}, 56(2):249--273, 1994.

\bibitem{Walker1988} R.~A. Walker, E.~T. O'Brien, N.~K. Pryer, M.~F. Soboeiro,
    W.~A. Voter, H.~P.
  Erickson, and E.~D. Salmon.
\newblock Dynamic instability of individual microtubules analyzed by video
  light microscopy: rate constants and transition frequencies.
\newblock {\em J. Cell Biol.}, 107(4):1437--1448, 1988.

\bibitem{Watson1989} D.~F. Watson, P.~N. Hoffman, K.~P. Fittro, and J.~W.
    Griffin.
\newblock Neurofilament and tubulin transport slows along the course of mature
  motor axons.
\newblock {\em Brain Res.}, 477(1--2):225--232, 1989.

\bibitem{Xu2001} Z.~Xu and V.~W.-Y Tung.
\newblock Temporal and spatial variations in slow axonal transport velocity
  along peripheral motoneuron axons.
\newblock {\em Neuroscience}, 102(1):193--200, 2001.

\end{thebibliography}


\end{document}